\bfseries\color{commands},
\small\color{gray},
\scriptsize\color{gray}}
\newcommand{\gatecolor}{gray!30}
\renewcommand\fbox{\fcolorbox{gray!50}{white}}
\definecolor{procedures}{RGB}{202,82,213}
\definecolor{commands}{RGB}{9,10,184}
\newcommand{\ia}{\mathrm{i}}
\newcommand{\ja}{\mathrm{j}}
\newcommand{\xa}{\mathrm{x}}
\newcommand{\sa}{\mathrm{l}}
\newcommand{\bb}{\mathrm{b}}
\newcommand{\ea}{\mathrm{e}}
\newcommand{\da}{\mathrm{d}}
\newcommand{\N}{\mathbb{N}}
\newcommand{\B}{\mathbb{B}}
\newcommand{\Z}{\mathbb{Z}}
\newcommand{\q}{\mathrm{q}}
\newcommand{\qa}{\mathrm{a}}
\newcommand{\bq}{\bar{\q}}
\renewcommand{\op}{\mathrm{op}}
\newcommand{\skp}{\tbt{ skip};}
\newcommand{\tb}[1]{\text{\color{commands}\textnormal{\textbf{#1}}}}
\newcommand{\tbt}[1]{\text{\textnormal{\textbf{#1}}}}
\newcommand{\pn}[1]{\text{\color{procedures}\textnormal{\texttt{#1}}}}
\newcommand{\el}[1]{\textnormal{\small[$#1$]}}
\newcommand{\proc}{\mathrm{proc}}
\newcommand{\decl}{\tbt{decl }}
\newcommand{\call}{\tbt{call }}
\renewcommand{\ket}[1]{\ensuremath{\left| #1\right\rangle}}
\newcommand{\kpsi}{\ket{\psi}}
\newcommand{\kpsip}{\ket{\psi'}}
\newcommand{\kpsipp}{\ket{\psi''}}
\newcommand{\kpsik}{\ket{\psi_k}}
\newcommand{\qcase}[3]{\tbt{qcase } #1 \tbt{ of }\{0\to #2, 1\to #3\}}
\newcommand{\cif}[3]{\tbt{if } #1 \tbt{ then } #2 \tbt{ else }#3}
\newcommand{\ST}{\ensuremath{\mathrm{S}}}
\newcommand{\D}{\mathrm{D}}
\newcommand{\PR}{\ensuremath{\mathrm{P}}}
\newcommand{\U}{\mathrm{U}}
\newcommand{\asg}{{\ \textnormal{\texttt{\textasteriskcentered =}}\ }}
\newcommand{\stdv}{\kpsi, A,f}
\newcommand{\width}{\textnormal{width}}
\newcommand{\fbqp}{\textnormal{\textsc{fbqp}\xspace}}
\newcommand{\bqp}{\textnormal{\textsc{bqp}\xspace}}
\newcommand{\fbqpolylog}{\textnormal{\textsc{fbqpolylog}\xspace}}
\newcommand{\bqpolylog}{\textsc{bqpolylog}\xspace}
\newcommand{\qnc}{\textnormal{\textsc{qnc}}}
\newcommand{\bqnc}{\textnormal{\textsc{fbqnc}}}
\newcommand{\half}{\textnormal{\textsc{half}\xspace}}
\newcommand{\wi}{\textnormal{\textsc{width}\xspace}}
\newcommand{\plp}{\textnormal{\textsc{plp}\xspace}}
\newcommand{\size}[1]{| #1 |}
\newcommand{\len}[1]{\texttt{len}(#1 )}
\newcommand{\lent}[1]{\texttt{len}_{#1 }}
\newcommand{\lentt}[2]{\texttt{len}_{#1 }^{#2}}
\newcommand{\sem}[1]{\llbracket #1 \rrbracket}
\newcommand{\topbot}{\diamond}
\newcommand{\semto}[1]{\stackrel{}{\longrightarrow}}
\newcommand{\unot}{\textnormal{NOT}}
\def\orcidID#1{\smash{\href{http://orcid.org/#1}{\protect\raisebox{-1.25pt}{\protect\includegraphics{ORCID_Color.eps}}}}}
\newcounter{program}
\newenvironment{program*}{\begin{equation*}\aligned}{\endaligned\end{equation*}}
\declaretheorem[name=Theorem]{theorem}
\declaretheorem[name=Lemma]{lemma}
\declaretheorem[name=Definition]{definition}
\declaretheorem[name=Example]{example}
\title{Quantum Programming in Polylogarithmic Time}
\author[1]{Florent Ferrari}
\author[2]{Emmanuel Hainry}
\author[2]{Romain Péchoux}
\author[2]{Mário Silva}
\affil[1]{École Normale Supérieure de Lyon, France}
\affil[2]{Université de Lorraine, CNRS, Inria, LORIA, F-54000 Nancy, France}
\date{
\today\\[2ex]
\small \texttt{florent.ferrari@ens-lyon.fr, \{hainry,pechoux,mmachado\}@loria.fr}}
\begin{document}

\maketitle

\begin{abstract}
Polylogarithmic time delineates a relevant notion of feasibility on several classical computational models such as Boolean circuits or parallel random access machines. As far as the quantum paradigm is concerned, this notion yields the complexity class {\fbqpolylog} of functions approximable in polylogarithmic time with a quantum random access Turing machine.
We introduce a quantum programming language with first-order recursive procedures, which provides the first programming language-based characterization of {\fbqpolylog}.
Each program computes a function in {\fbqpolylog} (soundness) and, conversely, each function of this complexity class is computed by a program (completeness).
We also provide a compilation strategy from programs to uniform families of quantum circuits of polylogarithmic depth and polynomial size, whose set of computed functions is known as $\qnc$, and  recover the well-known separation result ${\fbqpolylog} \subsetneq {\qnc}$.
\end{abstract}

\section{Introduction}
\label{s:intro}

\subsection{Motivation}

Quantum computing is a field of research, which is drawing a great amount of interest as it leverages quantum superposition and interference to obtain computational advantage.
The development of  quantum programming languages is thus a key issue, which raises major technical and conceptual challenges to ensure their physicality.
In order to check that quantum programs can be compiled and executed on a quantum computer, one has to
design restrictions and constraints implying that quantum programs do not break the laws of quantum mechanics, for example, no-cloning of data and unitarity of operators.
In addition, there is a need to tame their complexity in order to ensure their feasibility.
By feasibility, we mean that quantum executions do not use too many ancillary qubits and run in tractable time.

Taking inspiration from the classical world, this kind of issue  has lead to the definition and study of several quantum polynomial time classes.
One of the most striking examples of such classes is {\bqp}, the quantum analog of the class of bounded-error probabilistic polynomial time problems {\textsc{bpp}}.
By Yao's theorem~\cite{Yao1993}, {\bqp} corresponds exactly to what can be computed by uniform families of quantum circuits of polynomial size. This class, as well as its extension to functions, {\fbqp}, have been characterized through various means, including function algebras~\cite{Yamakami20} and first-order programs~\cite{HPS23}.

A natural question is then to study subpolynomial complexity classes.
As for classical programs, it allows to express notions of parallel complexity, which is highly relevant in the quantum setting where program superpositions can be viewed as a kind of parallelism with interferences.
Parallelization can be used to reduce the maximum number of operations performed in total on each qubit. 
This is particularly challenging as qubit fidelity, i.e., the ability of qubits to align with their intended states through time and unitary operations, is a bottleneck of quantum computation~\cite{HuangEtAl19}. While polynomial time algorithms performed in sequence are useful in the \emph{fewer qubits, higher fidelity} setting~\cite{HPS23}, parallelized computation becomes more interesting in the case of \emph{more qubits, lower fidelity}~\cite{LDX19}, as the total number of operations on individual qubits, and their necessary coherence time, scale more slowly.
Moreover, separation results between those small complexity classes could lead to a proof of the quantum advantage, as constant depth quantum circuits have been shown to be strictly more powerful than their classical counterparts~\cite{BGK18}.

In the literature, polylogarithmic (polylog) time has been introduced and studied on Quantum Random-Access Turing Machine (QRATM)~\cite{QRAM2008}. As in the classical case, this definition uses random-access machines, as opposed to standard quantum Turing machines, because of the sub-linearity of time: although the machine cannot read its entire input, it can access any input bit or qubit. On quantum models, polylog time corresponds to problems that a QRATM can solve in a polylog number of steps, leading to the definition of the complexity class {\fbqpolylog}~\cite{yamakami2022,yamakami2024} of functions computable with bounded-error in quantum polylog time.

A main open problem is to design programming languages characterizing such a polylog class abstracting from the low-level considerations (machines, uniformity conditions, etc.)~\cite{Yamakami24-cie}.

\subsection{Contributions}

This paper makes a first step towards solving this problem. Towards that end, we introduce a quantum programming language with first-order recursive procedures, named {\plp} for PolyLog Programs (Figure~\ref{fig:syntax}), on which we obtain the following results:
\begin{itemize}
\item \plp{} programs are terminating (Theorem~\ref{thm:termination}) and reversible (Theorem~\ref{thm:reversibility}).
\item \plp{} is sound for {\fbqpolylog} (Theorem~\ref{thm:soundnesscompleteness}), i.e., each \plp{} program computes a function in {\fbqpolylog}. Soundness relies on the use of a bounded recursion scheme for procedures to enforce the required polylog time properties,
as illustrated by the binary search and square-log examples of Figure~\ref{fig:lfoq-examples}.
\item \plp{} is  complete for {\fbqpolylog} (Theorem~\ref{thm:soundnesscompleteness}), i.e., each function of this complexity class is computed by a \plp{} program. Completeness is shown by a direct encoding of polylog QRATM in $\plp$.
\item  $\plp$ is also sound but not complete for $\qnc$. For soundness, we outline a compilation algorithm that from a {\plp} program and its input size outputs a quantum circuit of polylog depth and polynomial size, i.e., circuits computing functions in {\qnc}  (Theorem~\ref{thm:comp}). 
There is an implementation of this algorithm available at \url{https://gitlab.inria.fr/mmachado/pfoq-compiler}, whose compiler works for programs even beyond the $\plp$ fragment, namely non-polylog programs.
Completeness does not hold (Theorem~\ref{thm:more-than-polylog}) as it is well-known that $\fbqpolylog$ is strictly included in $\qnc$.
\end{itemize}

\subsection{Related Work}

Different characterizations of quantum complexity classes have been obtained for polynomial time using, non-exhaustively, lambda-calculus~\cite{dLMZ10}, function algebra~\cite{Yamakami20}, and a first-order programming language~\cite{HPS23}.
A characterization of \bqpolylog{} based on a function algebra has been provided in~\cite{yamakami2022,yamakami2024}, where a \emph{fast quantum recursion scheme} is used to ensure that programs terminate in polylogarithmic time. Our work employs a similar bounded recursion scheme, using a simple divide-and-conquer strategy on qubits. This characterization can be seen as simpler and more natural approach since an imperative first-order programming language is more accessible to the typical programmer.
Similar divide-and-conquer strategies have been explored in quantum computation not only for the purpose of finding quantum advantage~\cite{BGK18,childs2022} but also to leverage classically-inspired techniques in the quantum scenario, where reversibility and unitarity must be satisfied~\cite{VBE96, TK08}. Our work is mostly focused on this second aspect, where the objective is to balance the expressivity of a quantum programming language with the statically-verifiable properties of its programs, namely unitarity, time complexity, as well as circuit size and depth.

\section{First-Order Quantum Programs}
\label{s:lfoq}


\begin{figure}
\begin{center}
\fbox{
\begin{tabular}{llcl}
(Integers) & $\ia$ & $\triangleq$ & $\xa \mid k \mid \ia \pm k \mid \ia/2 \mid \size{\sa} $\\
(Booleans) & $\bb$ &$\triangleq$&$\ia \geq \ia\mid \ldots \mid \bb \wedge \bb \mid \ldots$\\
(Qubit lists) & $\sa$ &$\triangleq$  & $\bar{\q} \mid  \sa\ominus \el{\ia} \mid  \sa^\boxminus \mid \sa^{\boxplus} $\\
(Qubits) & $\q$ & $\triangleq$ & $\sa\el{\ia} $\\
(Statements) & $\ST$ &$\triangleq$ & $\tbt{skip}; \mid  \q \asg \U^g(\ia); \mid \ST\ \ST \mid \tbt{if }\bb\tbt{ then }\{\ST\}\tbt{ else }\{\ST \}$\\
& & & $\mid  \qcase{\q}{\ST}{\ST} $\\
& & &$ \mid \call \proc(\sa_1,\ldots,\sa_{n});$\\
(Procedure decl.) & $\D$ &$\triangleq$ & $\varepsilon \mid  \decl \proc(\bar{\q}_1, \ldots,\bar{\q}_n)\{\ST\},\,\D$ \\
(Programs) & $\PR
$ &$\triangleq$ &$\D::\ST $
\end{tabular}}
\end{center}
\caption{Syntax of programs}
\label{fig:syntax}
\end{figure}

\subsection{Syntax}
The considered language is a quantum programming language with first-order recursive procedures whose syntax is provided in Figure~\ref{fig:syntax}. There are four basic types $\tau$ for expressions: 
\begin{enumerate}
\item \emph{Integer} expressions are variables $\xa$, constant $k \in \mathbb{N}$, arithmetic operations like $\ia \pm k$ or $\ia / 2$,\footnote{The semantics of $/2$ will be defined as the ceiling of the result, hence it preserves the set of integers.} as well as the size  $\size{\sa}$ of a list of qubits $\sa$.
\item \emph{Boolean} expressions are defined in a standard way.
\item \emph{Qubit lists} are lists of unique (i.e., non-duplicable) qubit pointers. 
A qubit list expression $\sa$ is either a variable $\bar{\q}$, the first (respectively second) half $\sa^\boxminus$ (resp. $\sa^\boxplus$) of the qubit list $\sa$, or a list $\sa\ominus\el{\ia}$ where the $\ia$-th element of $\sa$ has been removed. We will also use some syntactic sugar for removing multiple elements of a list with $\sa\ominus\el{\ia_1, \ldots, \ia_k}$.
\item \emph{Qubit} expressions are of the shape $\sa\el{\ia}$, which denotes the $\ia$-th qubit in $\sa$. We also define syntactic sugar for pointing to the $n$-th \emph{last} qubit in a list, by defining for any $n\geq1$, $\bar{\q}\el{-n}\triangleq \bar{\q}\el{|\bar{\q}|-n+1}$.
\end{enumerate}
Throughout the paper, $\ea$, $\da$, $\ldots$ will denote arbitrary expressions of any type.  
Given a syntactic object $t$, let $Var(t)$ be the set of qubit  variables used in $t$, e.g., $Var(\bar{\q}\ominus[2,3])=\{\bar{\q}\}$ is the set of qubit variables in the expression $\bar{\q}\ominus[2,3]$.

A program $\PR
\triangleq \D :: \ST$ is defined in Figure~\ref{fig:syntax} as a list of (possibly recursive) procedure declarations $\D$, followed by a program statement $\ST$. We assume that $Var(\ST) \subseteq Var(\PR)$ holds. 
In what follows, it will be convenient to order the set $Var(\PR)=\{\bar{\q}_1 , \ldots ,\bar{\q}_m \}$ by $\bar{\q}_1 < \ldots <\bar{\q}_m$ to fix a precise memory representation of quantum states.

Let Procedures be an enumerable set of procedure names $\proc$. We write $\proc\in\PR$ to denote that $\proc$ appears in $\PR$.  
Each procedure of name $\proc \in \PR$ is defined by a unique procedure declaration $\tbt{decl }\proc(\bar{\q}_1, \ldots,\bar{\q}_n)\{\ST\}\in \D$, which takes the lists of qubits $\bar{\q}_i$ as parameters.
It always holds that $Var(\ST) \subseteq \{\bar{\q}_1, \ldots,\bar{\q}_n\}$. We will sometimes write $\ST^{\proc}$ to explicitly state that $\ST$ is the statement of $\proc$.

Statements include the no-op instruction, (single-qubit) unitary application, sequences, conditional, quantum case, and procedure calls. For sake of universality~\cite{Bokyn1999}, in a unitary application $\q \asg \U^g(\ia);$, the unitary transformation $\U^g(\ia)\in\mathbb{K}^{2\times 2}$ can take an integer $\ia$ and a function $g \in \mathbb{Z} \to [0,2\pi)$ as optional arguments\footnote{In the case of quantum polynomial time, Adleman et al.~\cite{ADH97} showed how the choice of amplitudes can affect the expressivity of classes such as \bqp{}, requiring the restriction of \emph{polynomial-time approximable complex amplitudes}. How the set of amplitudes influences the class \fbqpolylog{} remains an open question, as discussed in~\cite{yamakami2022,yamakami2024}, and so we abstain from the use of the entire set of complex numbers and instead use a field $\mathbb{K}$ which may refer to, for instance, polynomial-time approximable complex amplitudes.}, and we omit them when they are of no use. As we shall see in next section, unitary transformations will be restricted to phase gate, rotation gates, and NOT gate.

 The quantum conditional $\tbt{qcase }\q \tbt{ of }\{0\to \ST_0,1\to\ST_1\}$ allows branching by executing statements $\ST_0$ and $\ST_1$ in superposition according to the state of qubit $\q$.
When we want to treat cases on multiple qubits, we will sometimes simplify the nested qcases, for example $\tbt{qcase }\q_1, \q_2 \tbt{ of }\{00 \to \ST_{00}, 01 \to \ST_{01}, 10 \to \ST_{10}, 11 \to\ST_{11}\}$ is a shorthand notation for
\newpage
\begin{lstlisting}[numbers=none,keywordstyle=\rmfamily\bfseries]
qcase $\q_1$ of {
    0 $\to$ qcase $\q_2$ of {
        0 $\to$ $\ST_{00}$,
        1 $\to$ $\ST_{01}$
    },
    1 $\to$ qcase $\q_2$ of {
        0 $\to$ $\ST_{10}$,
        1 $\to$ $\ST_{11}$
    }
}
\end{lstlisting}

In each procedure call $\call \proc (\sa_1,\ldots,\sa_n);$, the no-cloning theorem of quantum mechanics imposes the restriction that $\forall i \neq j,\ Var(\sa_i) \neq Var(\sa_j)$.

The syntax of Figure~\ref{fig:syntax} can be used to define typical quantum computing primitives such as controlled-NOT, swap, as well as Toffoli gates, as syntactic sugar:
\begin{align*}
\textrm{CNOT}(\q_1,\q_2)&\triangleq\  \qcase{\q_1}{\skp}{\q_2\asg \textrm{NOT};}\\
\textrm{SWAP}(\q_1,\q_2)&\triangleq\  \textrm{CNOT}(\q_1,\q_2)\ \textrm{CNOT}(\q_2,\q_1)\ \textrm{CNOT}(\q_1,\q_2)\\
\textrm{TOF}(\q_1,\q_2,\q_3)&\triangleq\ \qcase{\q_1}{\skp}{\textrm{CNOT}(\q_2,\q_3)}
\end{align*}

\subsection{Semantics}

Let $\mathbb{B}\triangleq \{0,1\}$ denote the set of Booleans and $\mathcal{L}(\mathbb{N})$ denote the set of lists of natural numbers, $[]$ being the empty list. We interpret basic types as follows:
\begin{align*}
\sem{\text{Integers}} &\triangleq \mathbb{Z} &
\sem{\text{Booleans}} &\triangleq \mathbb{B}  &\sem{\text{Qubit lists}} &\triangleq \mathcal{L}(\mathbb{N})   &\sem{\text{Qubits}} &\triangleq \mathbb{N} 
\end{align*}
Qubits are interpreted as integers (pointers) and qubit lists are interpreted as lists of pointers.
Each $\op \in \{\pm, /2, \geq,\wedge,\ldots\}$ of arity $n$ comes with a basic type signature $\op:: \tau_1 \times \ldots \times \tau_n \to \tau$ and computes a fixed total function $\sem{\op} \in \sem{\tau_1} \times \ldots \times \sem{\tau_n} \to \sem{\tau}$. We set $\op(\tau_1 ,\ldots, \tau_n) \triangleq \tau$.  For example, $\sem{/2} \triangleq m \mapsto \lceil \sfrac{m}{2}\rceil$. Constants $k$ are treated as particular operators of arity $0$.
Given a program $\PR$, for each basic type $\tau$, the semantics of expressions is described standardly in Figure~\ref{table:semnatbool} as a function 
\[\Downarrow_{\sem{\tau}}\ :  \tau \times (Var(\PR) \to \mathcal{L}(\mathbb{N})) \to \sem{\tau}.\]
$(\ea,f)\Downarrow_{\sem{\tau}} v$ holds when expression $\ea$ of type $\tau$ evaluates to the value $v\in\sem{\tau}$ under the context $f \in Var(\PR) \to \mathcal{L}(\mathbb{N})$. The context $f$ is just a map from each program input to a list of qubit pointers taken into consideration when evaluating $\ea$. For instance, we have that $(\bar{\q}\el{2},\bar{\q} \mapsto [1,4,5])\Downarrow_\mathbb{N} 4$ (the second qubit is of index $4$), $(\bar{\q}\ominus \el{4},\bar{\q} \mapsto[1,4,5])\Downarrow_{\mathcal{L}(\mathbb{N})} []$ ($[]$ is used for errors on type $\mathcal{L}(\mathbb{N})$), $(\bar{\q}\el{4},\bar{\q} \mapsto[1,4,5])\Downarrow_\mathbb{N} 0$ (index $0$ is used for error on type $\mathbb{N}$), and $(\bar{\q}\ominus\el{3},\bar{\q} \mapsto[1,4,5])\Downarrow_{\mathcal{L}(\mathbb{N})} [1,4]$ (the third qubit has been removed).

\begin{figure}
\begin{mdframed}
\centering
$
\begin{prooftree}
\hypo{\forall i\leq n,\ (\ea_i, f) \Downarrow_{\sem{\tau_i}} x_i}
\infer1[
]{(\op(\ea_1,\ldots,\ea_n), f) \Downarrow_{\sem{\op(\tau_1,\ldots,\tau_n)}} \sem{\op}(x_1,\ldots,x_n)}
\end{prooftree}
\qquad 
\begin{prooftree}
\hypo{(\sa,f)\Downarrow_{\mathcal{L}(\mathbb{N})} [x_1,\ldots,x_n]}
\infer1[
]{(\size{\sa},f)\Downarrow_{\Z} n}
\end{prooftree}
$
\\[4mm]
$
\begin{prooftree}
\hypo{\bar{\q}\in Var(\PR)}
\infer1[
]{(\bar{\q}, f) \Downarrow_{\mathcal{L}(\mathbb{N})} f(\bar{\q})}
\end{prooftree}
\qquad
\begin{prooftree}
\hypo{(\sa, f)\Downarrow_{\mathcal{L}(\mathbb{N})} [x_1, \ldots, x_m]}
\hypo{(\ia, f) \Downarrow_{\Z} k \in \{1,\ldots,m\}}
\infer2[
]{(\sa\ominus\el{\ia}, f) \Downarrow_{\mathcal{L}(\mathbb{N})} [x_1, \ldots, x_{k-1}, x_{k+1}, \ldots, x_m]}
\end{prooftree}
$
\\[4mm]
$
\begin{prooftree}
\hypo{(\sa, f)\Downarrow_{\mathcal{L}(\mathbb{N})} [x_1, \ldots, x_m]}
\hypo{(\ia, f) \Downarrow_{\Z} k \notin \{1,\ldots,m\}}
\infer2[
]{(\sa\ominus\el{\ia}, f) \Downarrow_{\mathcal{L}(\mathbb{N})} [\,]}
\end{prooftree}
\qquad
\begin{prooftree}
\hypo{(\sa, f)\Downarrow_{\mathcal{L}(\mathbb{N})} [x_1, \ldots, x_m]}
\hypo{m>1}
\infer2[
]{(\sa^\boxminus, f) \Downarrow_{\mathcal{L}(\mathbb{N})} [x_1,\dots,x_{\lceil\sfrac{m}{2}\rceil}]}
\end{prooftree}
$
\\[4mm]
$
\begin{prooftree}
\hypo{(\sa, f)\Downarrow_{\mathcal{L}(\mathbb{N})} [x_1, \ldots, x_m]}
\hypo{m>1}
\infer2[
]{(\sa^\boxplus, f) \Downarrow_{\mathcal{L}(\mathbb{N})} [x_{\lceil\sfrac{m}{2}\rceil+1},\dots,x_m]}
\end{prooftree}
\qquad
\begin{prooftree}
\hypo{(\sa, f)\Downarrow_{\mathcal{L}(\mathbb{N})} [x_1, \ldots, x_m]}
\hypo{(\ia, f) \Downarrow_{\Z} k \notin \{1,\ldots,m\}}
\infer2[
]{(\sa\el{\ia}, f) \Downarrow_{\mathbb{N}} 0}
\end{prooftree}
$
\\[4mm]
$
\begin{prooftree}
\hypo{(\sa, f)\Downarrow_{\mathcal{L}(\mathbb{N})} l}
\hypo{\size{l} \leq 1}
\hypo{\pm\in\{{\scriptstyle\boxminus, \boxplus}\}}
\infer3[
]{(\sa^\pm, f) \Downarrow_{\mathcal{L}(\mathbb{N})} [\,]}
\end{prooftree}
\ \quad
\begin{prooftree}
\hypo{(\sa, f)\Downarrow_{\mathcal{L}(\mathbb{N})} [x_1, \ldots, x_m]}
\hypo{(\ia, f) \Downarrow_{\Z} k \in \{1,\ldots,m\}}
\infer2[
]{(\sa\el{\ia}, f) \Downarrow_{\mathbb{N}} x_k}
\end{prooftree}
$
\end{mdframed}
\caption{Semantics of expressions}
\label{table:semnatbool}
\end{figure}

Let $\mathcal{H}_{2^n} $ denote the Hilbert space  $\mathbb{C}^{2^n}$ of $n$ qubits with tensor product $\otimes$ and let $\mathcal{P}(\mathbb{N})$ denote the powerset of $\mathbb{N}$.  
Given a program $\PR$, let the \emph{length} of $\PR$ be a function mapping each qubit variable  $\bar{\q} \in Var(\PR)$ to an integer  $\len{\bar{\q}} \in \mathbb{N}$. We write $\lent{\PR}$ as a shorthand for $\sum_{\bar{\q} \in Var(\PR)}\len{\bar{\q}}$ and $\lentt{\PR}{<\bar{\q}}$ as a shorthand for $\sum_{\bar{\q}' \in Var(\PR),\ \bar{\q}'<\bar{\q}}\len{\bar{\q}'}$.

A \emph{configuration} $c$ of program $\PR$ over $\lent{\PR}$ qubits is of the shape
\[(\ST,\stdv) \in (\text{Statements}\cup \{\top,\bot\}) \times \mathcal{H}_{2^{\lent{\PR}}} \times (Var(\PR) \to \mathcal{P}(\mathbb{N})) \times (Var(\PR) \to \mathcal{L}(\mathbb{N})),\]
where $\top$ and $\bot$ are two special symbols denoting termination and error, respectively, where $\kpsi \in \mathcal{H}_{2^{\lent{\PR}}}$ is a quantum state, and where, for each qubit list $\bar{\q} \in Var(\PR)$, $A(\bar{\q})$ is the set of qubit pointers accessible  from  $\bar{\q}$ and $f(\bar{\q})$ is the list of qubit pointers assigned to $\bar{\q}$.
Given a qubit $\q$ such that $Var(\q)=\{\bar{\q}\}$, we write $A(\q)$ as a shorthand for $A(\bar{\q})$ and we write $A_{\q \backslash \{n\}}$ for the function $A' \in Var(\PR) \to \mathcal{P}(\mathbb{N})$ defined by $A'(\bar{\q}') \triangleq A(\bar{\q}'),$ $\forall \bar{\q}' \neq \bar{\q}$, and $A'(\bar{\q})\triangleq A(\bar{\q})\backslash \{n\}$.

Given a program $\PR \triangleq \D::\ST$, with $n=\lent{\PR}$, let $\text{Conf}_{n}$ be the set of configurations over $n$ qubits. The initial configuration in $\text{Conf}_{n}$ on input state $\kpsi \in \mathcal{H}_{2^n}$ is $c_{init}(\kpsi) \triangleq (\ST,\kpsi,\bar{\q} \mapsto \{1,\dots,\len{\bar{\q}}\},\bar{\q} \mapsto [1,\dots,\len{\bar{\q}}]) $. A final configuration can be defined in the same way as 
$c_{final}(\kpsi) \triangleq (\top,\kpsi,\bar{\q} \mapsto \{1,\dots,\len{\bar{\q}}\},\bar{\q} \mapsto [1,\dots,\len{\bar{\q}}])$.

Each unitary transformation $\U$ of a unitary application $\q \asg \U^g(\ia);$ comes with a function $\sem{\U}$ assigning a unitary matrix $\sem{\U}(g)(n) \in \mathbb{K}^{2 \times 2}$ to each integer $n$ and function $g \in \mathbb{Z} \to [0,2\pi)$. We restrict ourselves to three kinds of gates: the phase gate $Ph$, rotation gate $\mathrm{R_Y}$ and $NOT$ gate $\mathrm{NOT}$, whose semantics is defined as follows:
\begin{align*}
\sem{\textnormal{Ph}}(g)(n)& \triangleq \begin{pmatrix} 1 &0 \\ 0 & e^{ig(n)} \end{pmatrix}\\[0.2cm] 
\sem{\mathrm{R_Y}}(g)(n) &\triangleq \begin{pmatrix} \cos(g(n)) &-\sin(g(n)) \\ \sin(g(n)) & \cos(g(n)) \end{pmatrix}\\[0.2cm] 
\sem{\mathrm{NOT}}(\cdot)(\cdot) &\triangleq \begin{pmatrix} 0 &1 \\ 1 & 0 \end{pmatrix}
\end{align*}

\begin{figure}[t]
\begin{mdframed}
\centering
$
 \begin{prooftree}
 \hypo{\vphantom{(\sa_k\el{\ia},l)\Downarrow_{\mathbb{N}} n \notin A_k}}
 \infer1[
 ]{(\tbt{skip};,\stdv)\semto{0} (\top,\stdv)}
 \end{prooftree}
\qquad
\begin{prooftree}
 \hypo{(\q,f)\Downarrow_{\mathbb{N}} n \notin A(\q)}
 \infer1[
 ]{(\q\asg \U^g(\ja);\!,\stdv)\semto{0} (\bot,\stdv)}
 \end{prooftree}
$
\\[4mm]
$
\begin{prooftree}
\hypo{(\q,f)\Downarrow_{\mathbb{N}} n \in A(\q)}
\hypo{(\ia,f)\Downarrow_{\mathbb{N}} m}
\hypo{j=\lentt{\PR}{<\q}+n}
 \infer3[
 ]{(\q \asg \U^g(\ia);\!,\stdv)\semto{0} (  \top, I_{2^{j-1}}\otimes \sem{\U}(g)(m)\otimes I_{2^{\lent{\PR}-j}} \kpsi ,A, f )}
 \end{prooftree}
$
\\[4mm]
$
 \begin{prooftree}
  \hypo{(\ST_0,\stdv)\semto{m_1} (\top,\kpsip,A,f)}
   \hypo{(\ST_1,\kpsip,A,l)\semto{m_2} (\topbot,\kpsipp,A,f)}
   \hypo{\topbot \in \{\top,\bot\}}
 \infer3[
 ]{(\ST_0\ \ST_1,\stdv)\semto{m_1+m_2} ( \topbot,\kpsipp,A,f)}
 \end{prooftree}
$
\\[4mm]
$
\begin{prooftree}
  \hypo{( \ST_0,\stdv)\semto{m} (\bot,\stdv)}
  \infer1[
  ]{( \ST_0\ \ST_1,\stdv)\semto{m}(\bot,\stdv)}
 \end{prooftree}
$
\\[4mm]
$
\begin{prooftree}
  \hypo{(\bb,f) \Downarrow_{\B} b}
  \hypo{(\ST_b,\stdv)\semto{m_b} (\topbot,\kpsip,A,f)}
     \hypo{\topbot \in \{\top,\bot\}}
  \infer3[
  ]{(\cif{\bb}{\{\ST_{1}\}}{\{\ST_{0}\}} ,\stdv)\semto{m_b}(\topbot ,\kpsip,A,f)}
  \end{prooftree}
$
\\[4mm]
$
\scalebox{0.975}{
\begin{prooftree}
  \hypo{(\q,f) \Downarrow_{\mathbb{N}} n \in A(\q)}
  \hypo{\forall k \in \mathbb{B},\ (\ST_{k},\kpsi,A_{\q \backslash \{n\}},f)\semto{m_k} (\top,\kpsik,A_{\q \backslash \{n\}},f)}
  \hypo{j=\lentt{\PR}{<\q}+n}
  \infer3[
  ]{ ( \qcase{\q}{\ST_0}{\ST_1},\stdv)\semto{\max_{k\in \{0,1\}} m_k}(\top,\sum_{k\in \{0,1\}} {\ket{k}}_{j} \bra{k}_{j} \kpsik,A,f)}
\end{prooftree}
}
$
\\[4mm]
$
\begin{prooftree}
  \hypo{(\q,f) \Downarrow_{\mathbb{N}} n \in A(\q)}
  \hypo{\exists k \in \mathbb{B},\ (\ST_{k},\kpsi,A_{\q \backslash \{n\}},f)\semto{m_k} (\bot,\kpsik,A_{\q \backslash \{n\}},f)}
  \infer2[
  ]{( \qcase{\q}{\ST_0}{\ST_1},\stdv)\semto{\max_{k\in \{0,1\}} m_k}(\bot,\stdv)}
  \end{prooftree}
$
\\[4mm]
$
\begin{prooftree}
  \hypo{(\q,f) \Downarrow_{\mathbb{N}} n \notin A(\q)}
  \infer1[
  ]{( \qcase{\q}{\ST_0}{\ST_1},\stdv)\semto{0}(\bot,\stdv)}
  \end{prooftree}
$
\\[4mm]
$
\begin{prooftree}
  \hypo{\forall j \leq n,\ (\sa_j,f) \Downarrow_{\mathcal{L}(\mathbb{N})} l_j \neq [\,] }
  \hypo{(\ST^{\proc}\{\sa_j/\bar{\q}_j\},\kpsi,A, f )\semto{m} (\topbot,\kpsip,A,f)}
  \hypo{\topbot \in \{\top,\bot\}}
  \infer3[
  ]{(\call \proc (\sa_1,\ldots,\sa_n);,\stdv)\semto{m+1} (\topbot,\kpsip,A,f)}
\end{prooftree}
$
\\[4mm]
$
\begin{prooftree}
  \hypo{\exists j \leq n,\ (\sa_j,f) \Downarrow_{\mathcal{L}(\mathbb{N})} [\,]}
  \infer1[
  ]{(\call \proc (\sa_1,\ldots,\sa_n);,\stdv)\semto{m+1} (\top,\kpsi,A,f)}
\end{prooftree}
$
\end{mdframed}
\caption{Semantics of statements}
\label{table:operationalsemantics}
\end{figure}

The big-step semantics $\cdot \semto{ } \cdot $ is defined in Figure~\ref{table:operationalsemantics} as a relation in $\bigcup_{n \in \mathbb{N}} \text{Conf}_n  \times \text{Conf}_n$.
The symbols $\bot$ and $\top$ for error and termination, respectively, are terminal states: they cannot appear on the left-hand-side of a rule. 
In Figure~\ref{table:operationalsemantics}, the function $A$ of accessible qubits is used to ensure that unitary operations on qubits can be physically implemented. For example, the statements $\ST_0$ and $\ST_1$ of a quantum branch $\tbt{qcase }{\q}\tbt{ of }\{0\to\ST_0,\,1\to \ST_1\}$ cannot access the control qubit $\q$ to ensure reversibility.
To avoid cumbersome use of nested conditionals in the program syntax, each procedure call on a (at least one) empty qubit list is semantically equivalent to a$\skp$ (see Figure~\ref{table:operationalsemantics}).

We write $\sem{\PR}(\kpsi) = \kpsip$, whenever $c_{init}(\kpsi) \semto{m}c_{final}(\kpsip)$ holds. If the program $\PR$ terminates on all inputs (i.e., always reaches a final configuration) then $\sem{\PR}$ is a total function on quantum states.
Note that if a program terminates then it is obviously error-free (i.e., does not reach a configuration with a $\bot$) but the converse property does not hold. However, every program $\PR$ can be efficiently transformed into an error-free program $\PR_{\neg\bot}$ such that $\forall \kpsi$, if $\sem{\PR}(\kpsi)$ is defined then $\sem{\PR}(\kpsi)=\sem{\PR_{\neg\bot}}(\kpsi)$. This can be done, for instance, by checking the size of qubit lists before accessing them. Hence we will restrict ourselves to error-free programs in what follows.

When an input state is defined by different qubit lists, we denote them in subscript. For instance, for $x,y\in\{0,1\}^\ast$ and $m\in\mathbb{N}$, we have that $\ket{x}_{\bar{\q}_1}\ket{y}_{\bar{\q}_2}$ indicates state $\ket{x}$ given as input to qubit list $\bar{\q}_1$, state $\ket{y}$ given as input to qubit list $\bar{\q}_2$.

\begin{figure}
\begin{mdframed}[innerbottommargin=0pt,skipbelow=-5mm]
\begin{center}
\begin{minipage}[t]{0.45\textwidth}
\underline{\texttt{SEARCH}}
\begin{lstlisting}
$\tb{decl }\pn{search}(\bar{\q}_1,\bar{\q}_2)\,\{$
  $\tb{if }|\bar{\q}_1|>1 \tb{ then }\{$
    $\tb{qcase }\bar{\q}_1\el{|\bar{\q}_1|/2,|\bar{\q}_1|/2+1}\tb{ of }\{$
      $00 \to \tb{call }\pn{search}(\bar{\q}_1^\boxplus\ominus\el{1},\bar{\q}_2);,$
      $01 \to \bar{\q}_2\el{1}\asg \unot;,$
      $10 \to \tb{call }\pn{search}(\bar{\q}_1^\boxminus\ominus \el{-1},\bar{\q}_2);,$
      $11 \to \tb{skip};\,\}$
  $\}\tb{ else }\{\tb{ skip};\}$
$\}\,::$
$\tb{call }\pn{search}(\bar{\q}_1, \bar{\q}_2);$
\end{lstlisting}
\end{minipage}
\begin{minipage}{0.05\textwidth}
\ 
\end{minipage}
\begin{minipage}[t]{0.3\textwidth}
\underline{\texttt{SQLOG}}
\begin{lstlisting}
$\tb{decl }\pn{f}(\bq_1, \bq_2)\,\{$
  $\bq_1\el{1}\asg\uph^{\lambda x.2\pi/x}(\size{\bq_1});$
  $\tb{call }\pn{f}(\bq_1^\boxplus , \bq_2);$
  $\tb{call }\pn{g}(\bq_1, \bq_2\ominus\el{1});\,\}$
$\tb{decl }\pn{g}(\bq_1, \bq_2)\,\{$
  $\tb{qcase }\bq_1\el{\size{\bq_1}/2}\tb{ of }\{$
    $0 \to \tb{call }\pn{g}(\bq_1^\boxminus , \bq_2);,$
    $1 \to \bq_2[1]\asg \unot;$
$\}\}\,::$
$\tb{call }\pn{f}(\bq_1, \bq_2);$
\end{lstlisting}
\end{minipage}
\end{center}
\end{mdframed}
\caption{Examples of {\plp} programs}
\label{fig:lfoq-examples}
\end{figure}

\begin{example}[Binary search]\label{ex:binary-search}
Let $x \in 0^*1^*2^*$ be a sorted string and $\hat{x}$ denote the encoding of $x$ as a binary given by \(\hat{0}\triangleq 00\), \(\hat{1}\triangleq 01\), and \(\hat{2}\triangleq 10\). Program \texttt{SEARCH} in Figure~\ref{fig:lfoq-examples} computes the function $\llbracket\texttt{SEARCH}\rrbracket(\ket{\hat{x}}_{\bar{\q}_1}\ket{0}_{\bar{\q}_2})= \ket{\hat{x}}_{\bar{\q}_1}\ket{b}_{\bar{\q}_2}$, where $b\in\{0,1\}$ indicates whether $x$ contains a $1$ or not.
\end{example}

\begin{example}[Square Log]\label{ex:square-log}
Program \texttt{SQLOG} of Figure~\ref{fig:lfoq-examples} defines two recursive procedures \pn{f} and \pn{g}
and its complexity is square logarithmic in the size of the first qubit list $\bq_1$.
The procedure \pn{g} has logarithmic complexity in $\size{\bq_1}$ as each recursive call to \pn{g} divides the size of the first argument by 2.
Similarly, the procedure \pn{f} calls itself a logarithmic number of times and calls \pn{g} each time, hence accounting for a $O(\log^2(\size{\bq_1}))$ complexity.
\end{example}

\subsection{Polylogarithmic Time Restrictions}
We now define some restrictions on the admissible programs to guarantee that they terminate in polylogarithmic time (i.e., each procedure cannot perform more than a polylogarithmic number of recursive calls in the input size) and that their total number of sequential procedure calls (i.e., calls that are not in superposition) is bounded polylogarithmically.

Towards that end, we define a relation between procedures to account for recursion. Given a program $\PR\triangleq \D::\ST$, the call relation $\to_{\PR}\ \subseteq \textnormal{Procedures} \times \textnormal{Procedures}$ is defined for any two procedures $\proc_1,$ $\proc_2 \in \ST$ as $\proc_1 \to_{\PR} \proc_2$ whenever $\proc_2 \in \ST^{\proc_1}$. The relation $\succeq_{\PR}$ is then the transitive closure of $\to_{\PR}$. The relation $\sim_{\PR}$ is defined by $\proc_1 \sim_{\PR} \proc_2 $ if $\proc_1\succeq_{\PR} \proc_2$ as well as $\proc_2 \succeq_{\PR} \proc_1$ both hold. Finally, the relation $\succ_{\PR}$ is defined as $\proc_1 \succ_{\PR} \proc_2 $ if $\proc_1\succeq_{\PR} \proc_2$ and  $\proc_1 \not\sim_{\PR} \proc_2$ both hold. A procedure $\proc $ is \emph{recursive} whenever $\proc\sim_{\PR} \proc$ holds. Two procedures $\proc $ and $\proc'$ are \emph{mutually recursive} whenever $\proc\sim_{\PR} \proc'$ holds


\begin{definition}\label{def:half}
A program $\PR$ is said to be \emph{recursively halving}, denoted $\PR\in \half{}$, if for each procedure $\proc \in\PR$ and for each procedure call $\tbt{call } \proc'(\sa_1,\ldots,\sa_n);\in\ST^{\proc}$,
\[
\text{if }\proc\sim_\PR \proc' \text{ then there are } 1 \leq i \leq n \text{ and  } {\sa} \text{ such that }{\sa}^\boxminus \text{ or } {\sa}^\boxplus \text{ appears in }\sa_i.
\]
\end{definition}
This restriction can be viewed as a recursion scheme, which implies a polylogarithmic time bound on programs in $\half{}$ by ensuring that in every (mutually) recursive procedure call at least one of the input qubit lists is cut in half.

Now we impose a further condition on the number of  sequential (mutually) recursive procedure calls. For that purpose, we define the \emph{width} of a program $\PR$ in the following way.

\begin{definition}\label{def:width}
Given a program $\PR$, the \emph{width} of a procedure $\proc \in \PR$ is defined by $\width_\PR(\proc) \triangleq w^\proc_\PR(\ST^\proc)$ where $w^\proc_\PR(\ST)$ is defined inductively on statements by:
\begin{align*}
	w^\proc_\PR(\tbt{skip};) & \triangleq 0 \\
	w^\proc_\PR(\q \asg \U^g(\ia);) &\triangleq 0\\
	w^\proc_\PR(\ST_0 \,\ST_1) & \triangleq w^\proc_\PR(\ST_0)+w^\proc_\PR(\ST_1)\\
	w^\proc_\PR(\cif{\bb}{\{\ST_1\}}{\{\ST_0\}}) & \triangleq \max(w^\proc_\PR(S_0), w^\proc_\PR(\ST_1))\\
	w^\proc_\PR(\qcase{\q}{\ST_0}{\ST_1}) & \triangleq \max(w^\proc_\PR(\ST_0), w^\proc_\PR(\ST_1))\\
	w^\proc_\PR(\call \proc'(\sa_1,\ldots,\sa_n);) & \triangleq \begin{cases}1, \text{ if } \proc \sim_\PR \proc', \\ 0,\text{ otherwise.}\end{cases}
\end{align*}
The $\emph{width}$ of a program $\width(\PR)$ is defined by $\width(\PR) \triangleq \max_{\proc\in\PR} \width_\PR(\proc)$. Let $\wi_{\leq 1}$ be defined by $\wi_{\leq 1} \triangleq \{ \PR \ | \ \width(\PR)\leq 1\}$.
\end{definition}

\begin{definition}
The set $\plp$ of \emph{PolyLog Programs}  is defined by $\plp \triangleq \half{} \cap \wi_{\leq 1}$. 
\end{definition}


As we will see in next Section (Theorem~\ref{thm:soundnesscompleteness}), the restriction to $\plp$ programs ensures that they can be simulated by quantum random-access Turing machines running in polylogarithmic time.

\begin{example}
Both programs \texttt{SEARCH} and \texttt{SQLOG} of Figure~\ref{fig:lfoq-examples} can be shown to be in $\plp$. 
Let us consider the case of \texttt{SQLOG} which has two recursive procedures: \pn{f} and \pn{g}.
We have $\pn{f} \sim_{\texttt{SQLOG}} \pn{f} \succ_{\texttt{SQLOG}} \pn{g} \sim_{\texttt{SQLOG}} \pn{g}$.
It is easy to check that $\texttt{SQLOG}\in\half$ as the procedure \pn{f} recursively calls itself on $\bq_1^\boxplus$ and \pn{g} recursively calls itself on $\bq_1^\boxminus$.
Furthermore, we verify that:
\begin{align*}
\width(\texttt{SQLOG}) &= \max(\width_{\texttt{SQLOG}}(\pn{f}), \width_{\texttt{SQLOG}}(\pn{g}))\\
&=\max(w_{\texttt{SQLOG}}^{\pn{f}}(\ST^\pn{f}), w_{\texttt{SQLOG}}^{\pn{g}}(\ST^\pn{g}))\\
&=\max(0+w_{\texttt{SQLOG}}^{\pn{f}}(\tb{call }\pn{f}(\bq_1^\boxplus , \bq_2);)+w_{\texttt{SQLOG}}^{\pn{f}}(\tb{call }\pn{g}(\bq_1, \bq_2\ominus\el{1});),\\
&\phantom{=\max(} \max(w_{\texttt{SQLOG}}^{\pn{g}}(\tb{call }\pn{g}(\bq_1^\boxminus , \bq_2);), w_{\texttt{SQLOG}}^{\pn{g}}(\bq_2[1]\asg \unot;)))\\
&=\max(0+1+0, \max(1, 0))=1
\end{align*}
\end{example}

\subsection{Properties of PLP Programs}


Because of the $\half$ condition, programs in $\plp$ can be shown to be terminating.

\begin{theorem}\label{thm:termination}
If $\PR\in \plp$, then $\PR$ terminates.
\end{theorem}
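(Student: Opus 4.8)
The plan is to show termination by induction on a well-founded ordering associated with each configuration reached during evaluation. First I would isolate the correct measure. For each procedure $\proc$, the $\half{}$ condition guarantees that whenever $\proc \sim_\PR \proc'$, a recursive call to $\proc'$ replaces at least one argument list $\sa_i$ by a sub-list obtained using $\boxminus$ or $\boxplus$; by the semantics in Figure~\ref{table:semnatbool}, such an operation strictly decreases the length of that list (roughly halving it) as long as the list has length $>1$, and returns $[\,]$ otherwise, in which case the call is semantically a $\skp$. So the natural candidate measure for a call $\call\proc(\sa_1,\dots,\sa_n)$ is a tuple recording, for the $\sim_\PR$-equivalence class of $\proc$, the multiset (or sorted vector) of the lengths $|\sa_1|,\dots,|\sa_n|$ evaluated in the current context $f$, ordered lexicographically / by the multiset ordering. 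The key arithmetic fact to record is that $\lceil m/2 \rceil < m$ for all $m \ge 2$, and that once a list becomes empty or a singleton the corresponding recursive branch disappears.

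Next I would set up the induction properly. Because the language is first-order and the call graph is finite, the $\sim_\PR$-classes are linearly quasi-ordered by $\succ_\PR$; I would do an outer induction on the position of $\proc$'s class in this finite $\succ_\PR$-ordering (calls to strictly smaller procedures are handled by the outer induction hypothesis, and there are only finitely many of them syntactically, each triggering a terminating sub-evaluation), and an inner induction on the multiset of argument-list lengths for calls within the same $\sim_\PR$-class. For non-call statements (skip, unitary application, sequencing, conditional, qcase) I would argue by a routine structural sub-induction on the statement: skip and unitary application terminate in one step; sequencing, conditionals, and qcase reduce to finitely many strictly smaller sub-statements whose evaluations terminate by the structural hypothesis, and in the qcase case both branches $\ST_0,\ST_1$ terminate independently and are recombined, so the max of the two step counts is finite. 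The crucial point is that within a single procedure body the number of (mutually recursive) calls is finite and syntactically fixed, so the body contributes only finitely many recursive calls, each on a strictly smaller argument measure.

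Then I would combine these: define the global measure of a configuration $(\ST,\kpsi,A,f)$ as the pair consisting of ($i$) the maximal $\succ_\PR$-rank of a procedure reachable from $\ST$, together with the multiset of argument lengths of the outermost pending call in that class, and ($ii$) the structural size of $\ST$, ordered lexicographically. Every semantic rule in Figure~\ref{table:operationalsemantics} either terminates immediately or passes to premises whose configurations have strictly smaller measure: the $\call$ rule either falls into the empty-list case (immediate $\skp$-like termination) or substitutes $\sa_j/\bar{\q}_j$ into $\ST^\proc$, where the $\half{}$ condition forces the argument measure down for same-class calls and the $\succ_\PR$-rank down for lower-class calls; all other rules strictly decrease the structural component while not increasing the first. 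Since the ordering is well-founded (lexicographic product of a finite order, the multiset order on $\mathbb{N}^n$, and $\mathbb{N}$), there is no infinite derivation, and every evaluation reaches $\top$ or $\bot$; restricting to error-free programs, it reaches $c_{final}$.

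The main obstacle I anticipate is making precise the interaction between the $\half{}$ condition and the $\ominus$ and the syntactic list operations: the $\half{}$ definition only requires that $\sa^\boxminus$ or $\sa^\boxplus$ \emph{appears in} some $\sa_i$, not that $\sa_i$ itself is literally $\sa^\boxminus$ — it could be, e.g., $(\bar\q^\boxplus\ominus[1])$ as in \texttt{SEARCH}. So I would need a small lemma: if a qubit-list expression $\sa_i$ contains a halving operator applied to a sub-expression that evaluates (in context $f$) to a list of length $\ge 2$, then $|\sa_i|$ evaluated in $f$ is strictly less than the length of the list bound to the relevant variable — and, crucially, strictly less than what that argument's length was in the caller's context, so the measure genuinely decreases. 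Handling the edge cases where the intermediate list has length $\le 1$ (so $\boxminus/\boxplus$ yields $[\,]$, and by the semantics the whole call becomes a no-op) is what needs care but is not deep. I would also note that mutual recursion is handled uniformly because the measure is taken over the whole $\sim_\PR$-class at once rather than per procedure.
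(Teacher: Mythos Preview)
Your proposal is correct and follows the same underlying idea as the paper: the \half{} condition forces some argument list to strictly shrink (via $\boxminus/\boxplus$) at every mutually recursive call, and a call on an empty list is a no-op, so a well-founded descent gives termination. The paper's own proof is a two-sentence sketch of exactly this; your version is simply the fully worked-out form, including the outer induction on $\succ_\PR$-rank and the observation that \half{} only requires the halving operator to \emph{appear in} an argument, both of which the paper leaves implicit.
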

\begin{proof}
This is trivially ensured by the {\half} condition, which shows that the size of arguments is strictly decreasing in recursive calls whenever the arguments are not empty as seen in the semantics of $\sa^\boxminus/\sa^\boxplus$ in Figure~\ref{table:semnatbool}, and the program semantics of Figure~\ref{table:operationalsemantics}, as procedure calls terminate on empty list.
\end{proof}

%
As $\plp$ programs are quantum programs, they must be reversible.
We show that we can constructively define a $\plp$ program $\PR^{-1}$ that computes the inverse of $\PR$.

\begin{theorem}[Reversibility]
\label{thm:reversibility}
There exists a program transformation $\cdot^{-1}$ such that, for any $\PR\in\plp{}$, $\sem{\PR^{-1}}\circ \sem{\PR}$ is the identity and $\PR^{-1}\in\plp{}$.
\end{theorem}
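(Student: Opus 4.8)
The plan is to define the inverse transformation $\cdot^{-1}$ by structural induction on statements, reversing the order of sequential composition and dualizing each primitive, and then to lift this to procedures and programs. At the level of statements, I would set $(\tbt{skip};)^{-1} \triangleq \tbt{skip};$; for a unitary application, $(\q \asg \U^g(\ia);)^{-1}$ should apply the adjoint gate, which stays within our gate set since $\sem{\textnormal{Ph}}(g)(n)^\dagger = \sem{\textnormal{Ph}}(-g)(n)$ (using $g' = \lambda n.\, (2\pi - g(n)) \bmod 2\pi$ to land back in $[0,2\pi)$), $\sem{\mathrm{R_Y}}(g)(n)^\dagger = \sem{\mathrm{R_Y}}(-g)(n)$ similarly, and $\sem{\mathrm{NOT}}$ is self-inverse; for sequences, $(\ST_0\ \ST_1)^{-1} \triangleq \ST_1^{-1}\ \ST_0^{-1}$; for conditionals, $(\cif{\bb}{\{\ST_1\}}{\{\ST_0\}})^{-1} \triangleq \cif{\bb}{\{\ST_1^{-1}\}}{\{\ST_0^{-1}\}}$, which is sound because the boolean $\bb$ is evaluated purely from the classical context $f$, which is never modified by the semantics; for quantum case, $(\qcase{\q}{\ST_0}{\ST_1})^{-1} \triangleq \qcase{\q}{\ST_0^{-1}}{\ST_1^{-1}}$, valid because the control qubit $\q$ is inaccessible inside the branches (so the $A_{\q\backslash\{n\}}$ book-keeping and the $\sum_k \ket{k}_j\bra{k}_j$ in the semantics make the overall operator a block-diagonal unitary whose inverse is the block-diagonal of the inverses); and for a call $(\call \proc(\sa_1,\ldots,\sa_n);)^{-1} \triangleq \call \proc^{-1}(\sa_1,\ldots,\sa_n);$. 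At the program level, for $\PR \triangleq \D :: \ST$ I would take $\PR^{-1} \triangleq \D^{-1} :: \ST^{-1}$, where $\D^{-1}$ replaces each declaration $\decl \proc(\bar{\q}_1,\ldots,\bar{\q}_n)\{\ST^\proc\}$ by $\decl \proc^{-1}(\bar{\q}_1,\ldots,\bar{\q}_n)\{(\ST^\proc)^{-1}\}$.

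Next I would establish the semantic correctness, namely $\sem{\PR^{-1}} \circ \sem{\PR} = \mathrm{id}$. The natural route is to prove the stronger statement by induction on the derivation of the big-step semantics: if $(\ST,\kpsi,A,f) \semtom{m} (\top,\kpsip,A,f)$ then $(\ST^{-1},\kpsip,A,f) \semtom{m'} (\top,\kpsi,A,f)$ for some $m'$ (we have restricted to error-free programs, so the $\bot$ cases do not arise, and the accessibility map $A$ and context $f$ are returned unchanged by every rule, which is exactly what makes the branch reversals legitimate). The only subtle cases are \texttt{qcase} and \texttt{call}: for \texttt{qcase} one uses that the map $\kpsi \mapsto \sum_k \ket{k}_j\bra{k}_j \kpsik$ with $\kpsik$ obtained by running $\ST_k$ on the projected-and-restricted state is exactly $\bigoplus_k U_k$ for the per-branch unitaries $U_k$, whose inverse is $\bigoplus_k U_k^{-1}$, matched by running $\ST_k^{-1}$; for \texttt{call}, one unfolds the substitution $\ST^\proc\{\sa_j/\bar{\q}_j\}$ and appeals to the fact that $(\ST^\proc\{\sa_j/\bar{\q}_j\})^{-1} = (\ST^\proc)^{-1}\{\sa_j/\bar{\q}_j\}$ (the inverse transformation commutes with qubit-list substitution, a routine structural check), together with the induction hypothesis — this is well-founded because the ambient program terminates by Theorem~\ref{thm:termination}, so the derivation trees are finite. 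Instantiating at the initial configuration gives $\sem{\PR^{-1}}(\sem{\PR}(\kpsi)) = \kpsi$ for all $\kpsi$.

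Finally I would check the closure property $\PR^{-1} \in \plp$. Since $\plp = \half \cap \wi_{\leq 1}$, it suffices to observe that $\cdot^{-1}$ preserves both conditions, and the key point is that it preserves the call graph structure up to renaming: $\proc \to_{\PR} \proc'$ iff $\proc^{-1} \to_{\PR^{-1}} \proc'^{-1}$, because a call to $\proc'$ inside $\ST^\proc$ becomes a call to $\proc'^{-1}$ inside $(\ST^\proc)^{-1}$ and no calls are created or destroyed by the transformation. Hence $\succeq$, $\sim$, and $\succ$ are preserved under the renaming, so $\proc_1 \sim_\PR \proc_2$ iff $\proc_1^{-1} \sim_{\PR^{-1}} \proc_2^{-1}$. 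For $\half$: each recursive call $\call \proc'(\sa_1,\ldots,\sa_n);$ in $\ST^\proc$ with $\proc \sim_\PR \proc'$ becomes a call with the \emph{same} argument tuple $(\sa_1,\ldots,\sa_n)$ in $(\ST^\proc)^{-1}$ (argument lists are never altered, only call names and statement order), so the witness $\sa_i$ containing a $\sa^\boxminus$ or $\sa^\boxplus$ is unchanged. For $\wi_{\leq 1}$: one shows $w^{\proc^{-1}}_{\PR^{-1}}(\ST^{-1}) = w^\proc_\PR(\ST)$ by a straightforward induction on $\ST$ — all the defining clauses are symmetric in the two branches (the $\max$ for \texttt{if} and \texttt{qcase}, the sum for sequencing which is commutative) and the recursive-call clause depends only on whether $\proc \sim \proc'$, which is invariant. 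Therefore $\width(\PR^{-1}) = \width(\PR) \leq 1$ and $\PR^{-1} \in \half$, giving $\PR^{-1} \in \plp$. The main obstacle is the \texttt{qcase} case of the semantic correctness argument: one must carefully verify that the combined operator is genuinely unitary and block-diagonal with respect to the control qubit, using the accessibility restriction to rule out the branches touching $\q$, so that reversing the branch statements indeed inverts the whole operator; the rest is bookkeeping.
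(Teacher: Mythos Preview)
Your proposal is correct and follows essentially the same approach as the paper's own proof, which merely sketches the constructive definition by giving the two clauses $(\q \asg \U^g(\ia);)^{-1} \triangleq \q \asg (\U^g(\ia))^{\textnormal{\textdagger}};$ and $(\ST_0\ \ST_1)^{-1} \triangleq \ST_1^{-1}\ \ST_0^{-1}$. You have simply filled in the remaining cases and the closure-under-$\plp$ verification that the paper leaves implicit.
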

\begin{proof}
The program transformation can be constructively defined on program statements. For instance, $(\q \asg \U^g(\ia);)^{-1} \triangleq \q \asg (\U^g(\ia))^{\textnormal{\textdagger}};$ and $(\ST_0\ \ST_1)^{-1}    \triangleq \ST_1^{-1}\ \ST_0^{-1}$.
\end{proof}

Note that Theorems~\ref{thm:termination} and~\ref{thm:reversibility}  can also be obtained as corollaries of the $\fbqpolylog$-soundness that will be proved in Theorem~\ref{thm:soundnesscompleteness}.
We consider the ensured properties (termination and reversibility) as consistency checks before going into the complexity results.

%
%
%
%

\section{A Characterization of Quantum Polylog Time}
\label{s:fbqpolylog}

In this section, we will show that {\plp} characterizes exactly the functions in {\fbqpolylog},  the class of quantum polylog time approximable functions. That is, programs in {\plp} compute functions in {\fbqpolylog} (Soundness) and, reciprocally, for any function in {\fbqpolylog}, there exists a {\plp} program that computes it (Completeness).

\subsection{Quantum Random Access Turing Machines and Polylog Time}
\label{sec:qratm}

To define the class {\fbqpolylog}, we introduce the computational model of \emph{quantum random-access} Turing machines (QRATMs)~\cite{yamakami2022,yamakami2024}. {\fbqpolylog} is not defined on standard quantum Turing machines because, due to its sub-linear time complexity, such a machine would not be able to access all of its input. Random-access machines solve part of the problem by allowing the machine to jump over its input. 


A QRATM has a random access input tape, a log-space index tape, and $c$ work tapes and is then defined as a triplet $(Q,\Sigma,\delta)$, where $Q$ is a finite set of states containing an initial state $s_0$ and two (disjoint) subsets $Q_\text{acc}$ and $Q_\text{rej}$ for accepting and rejecting states, $\Sigma=\{0,1,\#\}$ is the tape alphabet, and the  transition function $\delta$ is such that \[\delta : Q\times \Sigma^{1+c} \to (Q\times \Sigma^{1+c} \times \{L,R,N\}^{1+c} \to \mathbb{C}).\] This transition maps the state and read symbols on the index tape and work tapes to a function mapping each state, each written symbol, and each head move to an amplitude.
Note that the input tape does not have a tape head, hence is not taken into account in this transition function.

To get access to any character of the input, a special transition of the machine is defined:
when the machine is in a special state $s_{\text{query}}$, the cell of input tape corresponding to the binary number written on the index tape is swapped with the cell under the work tape head, and the machine transitions to a state $s_{\text{accept}}$.
Note that, in contrast with~\cite{yamakami2022,yamakami2024}, the input tape is not read-only as we consider a class of functions rather than decision problems, hence having the modified input be part of the output is necessary.
This allows for example to consider the identity function.

A pure configuration of a QRATM is a tuple $(s, w, w_0, w_1, \ldots, w_c, z_0, z_1, \ldots, z_c) \in Q\times{\Sigma^*}^{2+c}\times \mathbb{Z}^{1+c}$ where $s$ is a state, $w$ the word written on the input tape, assumed to begin in cell $0$, $w_0$ is the word on the index tape, $w_1$, ..., $w_c$ the words written on the work tapes, $z_0$ is the position of the index tape head, $z_1$, ..., $z_c$ the tape head positions for the work tapes, all positions are relative to the first character of the word.
The initial configuration for input $x$ is $\gamma(x) \triangleq (s_0, x, \epsilon, \ldots, 0, \ldots)$.
We call a superposition of pure configurations a surface configuration.
Surface configurations can be written as $\sum_r \alpha_r\ket{r}$, with $r$ ranging over pure configurations, $\alpha_r\in\mathbb{C}$ is the amplitude associated with configuration $r$.
QRATMs are also required to satisfy reversibility and well-formedness condition: a configuration may have only one predecessor, and 
the transition function must preserve the norm of configurations, that means that for all reachable surface configurations, $\sum_r |\alpha_r|^2 = 1$.

A QRATM halts in time $t$ on input $x$ if, starting from the initial configuration $\gamma(x)$, after $t$ steps, its surface configuration is a superposition of pure configurations in accepting states.
If for all input $x$, a QRATM $M$ halts on input $x$ in time $T(\size{x})$, we say that $M$ halts in time $T$.
In particular, if there exists $k\in\mathbb{N}$ such that $M$ halts in time $O(\log^k(n))$, $M$ halts in \emph{polylog time}.
If a QRATM halts, its output is defined as the linear combination of the words on the input tape and work tapes, using the previous notations, it corresponds to $\sum_r \alpha_r \ket{w^r, w_0^r, w_1^r, \ldots, w_c^r}$.
Given a function $f:\{0,1\}^*\to\{0,1\}^*$, we say that a QRATM $M$ approximates $f$ with probability $p$ if for all input $x\in\{0,1\}^*$, starting from the initial configuration $\gamma(x)$, $M$ halts with an output $\sum_r \alpha_r \ket{w^r, w_0^r, w_1^r, \ldots, w_c^r}$ such that $\sum_{r\in Q_{acc}\times\{f(x)\}\times \Z^{2+c}} |\alpha_r|^2 \geq p$.

\begin{definition}
The class $\fbqpolylog$ is defined as the set of functions $f:\{0,1\}^*\to\{0,1\}^*$ such that there exists a QRATM  approximating $f$ with probability at least $\frac{2}{3}$ in polylog time.
\end{definition}

Although a natural theoretical model for describing polylogarithmic time, QRATMs are problematic because they are too semantic in nature: to characterise their complexity, the time bound must be given explicit and their halting condition depends on the inner state of the machine, which is a semantic condition. This motivates the provided characterization in the next section.

\subsection{Main Result}


We denote by $\sem{\plp}$ the set of functions computed by programs in $\plp$.
That is $\sem{\plp} \triangleq \{\sem{\PR} \mid \PR\in\plp\}$.
A program $\PR$ approximates function $f:\{0,1\}^*\to\{0,1\}^*$ with probability $p\in [0,1]$ if $\forall x \in \{0,1\}^*, |\braket{f(x)}{\sem{\PR}(x)}|^2 \geq p$, in other words if for all input, the output of $\PR$ coincides with $f$ with probability at least $p$.
The set of functions that can be approximated with probability at least $p$ is denoted by $\sem{\plp}_{\geq p}$.

\begin{theorem}[Soundness \& Completeness]\label{thm:soundnesscompleteness}
$\llbracket \plp{} \rrbracket_{\geq \frac{2}{3}} =  \fbqpolylog$.
\end{theorem}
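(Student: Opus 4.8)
The equality is proved by two inclusions, following the standard soundness/completeness split.

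\emph{Soundness} ($\llbracket\plp\rrbracket_{\geq 2/3} \subseteq \fbqpolylog$). Given $\PR\in\plp$, I would build a QRATM $M_\PR$ that simulates $\PR$ on input $x$ and runs in polylog time. The first step is to bound the running time of $\PR$: by the $\half$ condition, each (mutually) recursive call strictly halves at least one input qubit list, so the depth of the recursion for any procedure is $O(\log \lent{\PR})$; by the $\wi_{\leq 1}$ condition, each procedure body contains at most one sequential (mutually) recursive call, so the total number of sequential procedure calls along any branch of the execution is $O(\log^d \lent{\PR})$ where $d$ is the length of the $\succ_\PR$-chain of strongly connected components — a constant depending only on $\PR$. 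Since non-recursive control (conditionals, sequencing, qcase, unitary applications inside a single body) contributes only a constant factor per body, the overall sequential-time cost is polylogarithmic in $\lent{\PR} = \Theta(|x|)$. The QRATM maintains on its work tapes a representation of the current statement (a pointer into the fixed program text plus a call stack of bounded height times the list-address bookkeeping $A,f$, each entry being a short description of a sublist of $\{1,\dots,\lent{\PR}\}$ representable in $O(\log|x|)$ bits), and it uses the query transition to fetch/update individual input qubits when a unitary application or qcase is reached; the amplitudes of $\sem{\U}(g)(m)$ are exactly the gates provided by the field $\mathbb{K}$, which by assumption are polynomial-time (hence polylog-time via the random-access head, after a routine padding argument) approximable, so the transition function $\delta$ can carry them. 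Well-formedness and reversibility of $M_\PR$ follow from Theorem~\ref{thm:reversibility} applied locally to each simulated step. Finally, if $\PR$ approximates $f$ with probability $\geq 2/3$ then so does $M_\PR$, giving $\sem{\PR}\in\fbqpolylog$.

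\emph{Completeness} ($\fbqpolylog \subseteq \llbracket\plp\rrbracket_{\geq 2/3}$). Given a polylog-time QRATM $M$ approximating $f$, I would encode a single computation step of $M$ as a $\plp$ statement and then iterate it $T(|x|) = O(\log^k |x|)$ times using the $\half$-controlled recursion scheme. Concretely: reserve one qubit list holding the input $x$, and auxiliary qubit lists holding the index tape, the $c$ work tapes, the head positions (in binary, each of length $O(\log|x|)$), and the state (a constant number of qubits). One transition of $\delta$ is a finite case analysis on (state, read symbols), each branch being a fixed unitary on a constant number of qubits together with head-position increments/decrements — all expressible with nested \tbt{qcase}, \tbt{if}, $\U^g$, and the \textrm{CNOT}/\textrm{SWAP}/\textrm{TOF} sugar; the query transition is implemented by addressing the input list at the position named by the index-tape register, which the syntax supports through $\sa\el{\ia}$ with $\ia$ an integer expression computed from the head register (here I may need to expand the pointer-arithmetic sugar $\sa\el{\ia}$ to range over a dynamically computed index, using a logarithmic cascade of controlled swaps — essentially a QRAM gadget — which is itself a $\plp$-definable procedure of logarithmic width-1 recursion on the halving of the address-list). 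To run this step $T(|x|)$ times, I nest $k$ recursive procedures $\proc_1 \succ \cdots \succ \proc_k$, each recursing by halving a fresh "counter" qubit list of length $\lent{\PR}$: $\proc_k$ performs one step and recurses $O(\log|x|)$ times, $\proc_{k-1}$ calls $\proc_k$ once per level and recurses $O(\log|x|)$ times, etc., yielding $O(\log^k|x|)$ step-executions while keeping width $\leq 1$ at every level (exactly the \texttt{SQLOG} pattern generalized to depth $k$). The resulting program is in $\half \cap \wi_{\leq 1} = \plp$, and its output state, read off the designated output lists, reproduces $M$'s output superposition, so it approximates $f$ with probability $\geq 2/3$.

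\emph{Main obstacle.} The delicate part is the completeness direction, specifically the faithful and \emph{reversible} simulation of the QRATM's random-access query within the $\plp$ syntax: the machine addresses an arbitrary input cell via a logarithmic-length index register, whereas $\plp$ only provides halving ($\cdot^\boxminus,\cdot^\boxplus$), single-index removal ($\cdot\ominus[\ia]$), and indexed access $\sa[\ia]$. Realizing a genuine address-indexed read/write as a width-$\leq 1$, recursively-halving procedure — without cloning the address and without blowing up the width when the query is itself nested inside the step-iteration recursion — requires a careful divide-and-conquer routing gadget and a correspondingly careful accounting of the call relation $\sim_\PR$ so that the $\half$ and $\wi_{\leq1}$ conditions are preserved. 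A secondary subtlety on the soundness side is matching the QRATM time measure (which counts the $+1$ for each \call in our semantics and the constant per-body overhead) against the polylog bound uniformly in the number of qubits placed in superposition, using that qcase branches are executed with the $\max$ (not the sum) of their step counts.
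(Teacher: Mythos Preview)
Your soundness argument matches the paper's: both derive a polylog bound on the call tree from $\half$ (logarithmic recursion depth per $\sim_\PR$-class) and $\wi_{\leq 1}$ (at most one sequential mutually-recursive call per body), then simulate the program on a QRATM.

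For completeness the overall scheme is the same (simulate one QRATM transition by a $\plp$ fragment, then iterate it $O(\log^k n)$ times via $k$ nested halving procedures in the \texttt{SQLOG} pattern), but the tape representation differs. The paper does \emph{not} keep separate head-position and state registers. Instead it writes the finite-control state into the tape itself, adjacent to the cell currently under the head; a transition is then a purely local rewrite on three consecutive cells (\texttt{local\_step}), and \texttt{full\_step} sweeps this local rule across the whole polylog-sized index and work tapes. Random access to the \emph{input} tape is handled by a dedicated procedure \texttt{access\_input} that \textbf{qcase}s on each cell of the index tape and halves the input list accordingly---exactly the divide-and-conquer routing gadget you sketch. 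Your version instead stores head positions in explicit binary registers and uses the routing gadget for \emph{every} tape access. Both work: the paper's local encoding avoids the gadget on the work tapes at the cost of a per-step sweep, while yours pays the gadget cost at each access but needs no sweep.

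One point to tighten: your first formulation, ``$\sa[\ia]$ with $\ia$ an integer expression computed from the head register'', is not available in the language---integer expressions depend only on constants and list \emph{lengths}, never on qubit values. You immediately retreat to the controlled-swap cascade, which is the right fix and is precisely what the paper's \texttt{access\_input} does; you should present that as the mechanism from the start rather than as a fallback.
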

\begin{proof}
Soundness is proved via the fact that the $\half$ and $\wi_{\geq 1}$ restrictions imply a polylogarithmic bound for the depth of the call tree and a logarithmic bound on the degree of this tree.
Then we show that if some $\plp{}$ program $\PR$ approximates $f$, then the poly-logarithmic time QRATM simulating $\PR$ also approximates $f$ and guarantees that $f\in\fbqpolylog$.
%

Conversely, for completeness, given a polylog time QRATM $M$, we define a \plp{} program that simulates $M$.
To achieve this, we represent the input tape, the index tape and the work tapes using qubit lists, the state of the Turing machine is encoded inside the work tape by writing it to the left of the character under the tape head.
To simulate the execution of $M$, we define the following procedures:
\begin{itemize}
\item \texttt{access\_input}: allows for QRATM-like access to the input tape by performing quantum branching on each cell of the index tape. The correct cell on the input tape to be read is determined by `splitting' in half the set of possible input tape addresses according to the value of each index tape cell.
\item \texttt{local\_step}: simulates a constant-time transition of $M$ locally on three adjacent cells of the index tape and the work tape, or calls \texttt{access\_input} for the query step.
\item \texttt{full\_step}: performs \texttt{local\_step} iteratively to simulate a transition of $M$ over the entirety of the index and work tapes.
\item \texttt{iterate}: executes \texttt{full\_step} a polylogarithmic number of times, simulating the entire run of $M$.
\end{itemize}

These procedures can be combined to obtain a \plp{} program simulating $M$, by encoding $M$'s tapes in a way that allows for a local evolution of the state.
\end{proof}

\section{Circuit Compilation}
\label{s:compilation}

In this section, we sketch an algorithm that compiles {\plp} programs into circuits of polynomial size and polylogarithmic depth.
An implementation of this algorithm, that also works for non-$\plp$ programs, is available at \url{https://gitlab.inria.fr/mmachado/pfoq-compiler}. The two $\plp$ programs \texttt{SEARCH} and \texttt{SQLOG} displayed in Figure~\ref{fig:lfoq-examples} are also provided in the repository. 
Here, we describe the salient points of the compilation and show that the circuit obtained indeed has polylogarithmic depth.

The compilation strategy takes inspiration from~\cite{HPS23,HPS24} which in particular uses ancillas to factorize the circuits representing procedure calls in branches so as to prevent exponential blow-up of the circuit size.
Their technique is called \emph{anchoring and merging} as when a procedure call is first encountered, an ancilla is associated to this call (\emph{anchoring}), and when a subsequent call to this procedure happens with an input of the same size, this second call is then merged with the first (\emph{merging}).
This way, instead of doubling the size of the circuit whenever recursive calls appear in separate branches of a {\tbt{qcase}}, as in programs \texttt{SEARCH} and \texttt{SQLOG} of Figure~\ref{fig:lfoq-examples}, the size grows linearly in the number of nested recursive calls, hence preventing the exponential blow-up in complexity from the use of the quantum control statement~\cite{Yuan2022}.

Figure~\ref{fig:merging} exemplifies this phenomenon on the \texttt{SEARCH} program: the circuit on the left represented by a grey and white box the circuit for the \texttt{search} procedure applied to $\bq_1, \bq_2$. Since this procedure has two calls to itself, its natural compilation gives an in-depth duplication of the calls. The anchoring/merging process entails a single recursive compilation at the price of an overhead in terms of ancillary qubits and permutations.

\subsection{Outline of the Compilation Algorithm}\label{ss:compilsketch}

The compilation algorithm takes as input a program $\PR\in\plp$ together with a list of the sizes $\bar{s}\triangleq [s_1, \ldots, s_n]$ of its inputs.
As in the semantics, the algorithm maintains a function that maps qubit list variables to lists of pointers to their qubits.
Since the values of this function only depend on the size of the qubit variable, the generation of the circuit does not need to take qubit values into account.

The main compilation algorithm works by induction on the structure of the program statement.
Compiling statements such as $\sa[\ia] \asg \U^g(\ja);$ is straightforward: use the semantics to compute the wire number and which quantum gate to put into the circuit. Compiling a sequence is naturally done by composing the circuits obtained from compiling each statement. For compiling an $\tbt{if}$ statement, note that Booleans only depend on constants and the size of qubit lists, and hence can be computed from the knowledge of list $\bar{s}$.

A $\tbt{qcase}$ statement is compiled using controlled operations: consider $\qcase{\sa[\ia]}{\ST_0}{\ST_1}$, the circuit compiled for $\ST_0$ should be controlled negatively on the wire computed for $\sa[\ia]$, the circuit compiled for $\ST_1$ should be controlled positively on the same wire.
To keep track of those controls, a structure is maintained that lists the control qubits and their state.
The compilation of a non-recursive $\tbt{call}$ consists in compiling the statement of the procedure after substituting its arguments by their expressions provided the qubit lists are non empty.

The only case that introduces complexity is that of recursive calls.
A naive compilation strategy for recursive calls with a recursive procedure calling itself in both branches of a $\tbt{qcase}$, which is allowed by the $\wi_{\leq 1}$ restriction, would yield a number of gates in the compiled circuit exponential in the recursion depth.
To prevent this blow-up, the process (anchoring and merging) maintains a dictionary of ancillaries that associates an ancillary qubit to pairs of procedure names and sizes of the inputs: if the dictionary does not have a key for the considered procedure call, an anchoring ancillary is created, this ancillary is initialized through a \emph{multiple controlled-NOT} (Toffoli gate) encoding the \tbt{qcase} control considered and used to control the quantum circuit of the procedure statement; if the key is present in the dictionary, this ancillary is updated with another Toffoli gate and the two procedure calls are automatically merged as illustrated in Figure~\ref{fig:merging}. 
This strategy is sound as the $\wi_{\leq 1}$ restriction ensures that two repeated calls always occur in orthogonal branches and can be simply combined in the same ancillary qubit.

To show the polylogarithmic depth bound on circuits implementing \plp{} programs, 
we first need to demonstrate that merging in the context of {\plp} can be done in polylogarithmic depth.
Second, we show that the $\wi_{\geq 1}$ and $\half$ restrictions imply that the number of nested recursive procedure calls is polylogarithmically bounded.

\subsection{Compilation to a Circuit of Polylog Depth}\label{ss:overview}


\begin{figure}
\centering
\scalebox{1}{
\begin{tikzpicture}
\node[] (A) at (0,0) {
\begin{quantikz}[wire types = {b}]
\lstick{$\phantom{}\bq_1$} & \gate[style={fill=\gatecolor}]{}\vqw{1}& \rstick{\phantom{$\bar{\q}$}}\\
\lstick{$\phantom{}\bq_2$} & \gate{} & \rstick{\phantom{$\bar{\qa}$}}
\end{quantikz}};

\node[anchor=south west] (B) at (3,0.5) {
\scalebox{1}{
\begin{quantikz}[wire types = {q,q}, row sep = 3mm, column sep = 3mm]
\lstick[8]{$\bq_1$}&  & & \gate[3,style={fill=\gatecolor}]{}&\\[-10pt]
&  & & &\\[-10pt]
&  & & &\\[-2mm]
& \octrl{1} & \octrl{1} & \ctrl{-1} &\\[-5pt]
& \octrl{1} & \ctrl{4} &  \octrl{-1} &\\[-5pt]
& \gate[3,style={fill=\gatecolor}]{} \vqw{3}& & &\\[-10pt]
&  & & &\\[-10pt]
&  & & &\\[-2mm]
\lstick{$\bq_2$}& \gate{} &\targ{}&\gate{}\vqw{-4}&
\end{quantikz}}};

\node[anchor=north west] (C) at (3,-0.5) {
\scalebox{1}{
\begin{quantikz}[row sep = 3mm, column sep = 3mm]
\lstick[8]{$\bq_1$}&  & & &\gate[8]{}\permute{6,7,8,4,5,1,2,3} & &\gate[8]{}\permute{6,7,8,4,5,1,2,3} & & &\\[-10pt]
& & & & & & & & &\\[-6pt]
& & & & & & & & &\\[-2mm]
& \octrl{1} & \octrl{1} & \ctrl{1} & & & &\ctrl{1} & \octrl{1} &\\[-6pt]
& \ctrl{4} & \octrl{5} & \octrl{6} & & & &\octrl{6} & \octrl{5} &\\[-9pt]
&  & & && \gate[3,style={fill=\gatecolor}]{}\vqw{3} & & & &\\[-10pt]
& & & & & & & & &\\[-10pt]
& & & & & & & & &\\[-2mm]
\lstick{${\bq_2}$}& \targ{} & & & &\gate{} & & & &\\[-1mm]
\lstick{$\ket{0}$} & & \targ{} & \targ{} & &\ctrl{-1} & & \targ{} & \targ{} & \\[-1mm]
\lstick{$\ket{0}$} & & & \targ{}& \ctrl{-3}& & \ctrl{-3} & \targ{} & & 
\end{quantikz}}};

\draw[->] (A) to [bend left=20]  node[yshift = 5mm,xshift=0mm] {\textbf{(a)} in-depth} (B.west)  ;
\draw[->] (A) to [bend right=20] node[yshift =-5mm,xshift=0mm] {\textbf{(b)} merging} (C.west);
 \end{tikzpicture}}
 \captionsetup{justification=centering}
 \caption{Compilation strategies for \texttt{search} defined in Figure~\ref{fig:lfoq-examples}}
\label{fig:merging}
\end{figure}
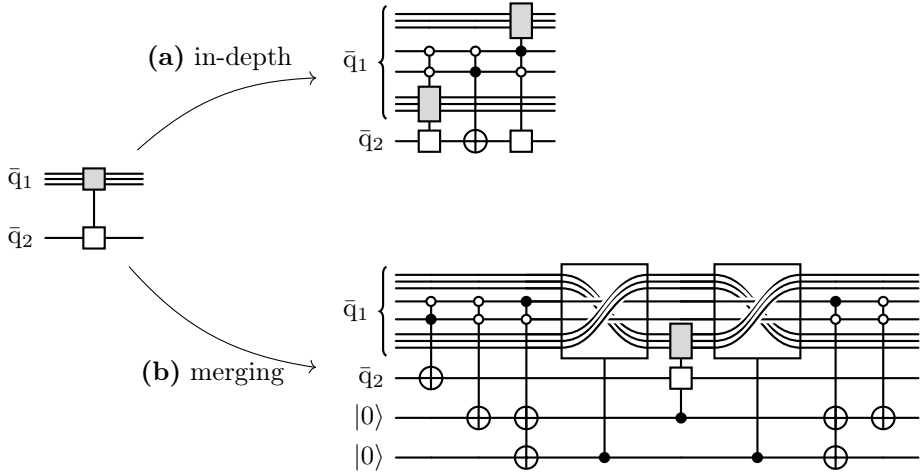


In a {\plp} program, a recursive procedure call always cuts one input qubit list in half.
This means that the procedure will work either on the first half or on the second half of the qubit list.
In the worst case, which is also the most typical, the procedure will be called recursively on each half depending on some condition.
To avoid doubling the treatment of those calls, the anchoring and merging process merges those calls in such a way that the subcircuit for the procedure on half as many qubits is able to work on both halves.
This implies conditionally swapping the two halves.

For instance, consider the \texttt{SEARCH} program in Figure~\ref{fig:lfoq-examples} performing binary search. The procedure body of \texttt{search}, in the recursive case (i.e. where the input $\bq_1$ has size at least 4) consists of three (non-trivial) quantum cases. According to the state of the control qubits, we either (1) perform a recursive call to $\texttt{search}$ on the second half of $\bq_1$, (2) apply a $NOT$ gate to $\bq_2$, or (3) perform a recursive call to $\texttt{search}$ on the first half of $\bq_1$.
Compiling these three branches in sequence incurs a recursive doubling of instances of $\texttt{search}$, as shown on the left circuit of Figure~\ref{fig:merging} where the circuit corresponding to the \texttt{search} procedure is symbolized by a gray box and a white box.
This doubling can be avoided by \emph{merging} the two calls to \texttt{search} in the same circuit, using controlled-swap gates (also called Fredkin gates).
This new circuit, given on the right of Figure~\ref{fig:merging}, contains only a single call to \texttt{search}.
Note that two ancillas are used: one for controlling whether the recursive \texttt{search} is executed, the other for controlling the swapping between the first and second half.
This accounts for a constant added cost for each recursive call.
In addition, the cost of the controlled permutation between the first and second halves of $\bq_1$ should be taken into account.


\begin{lemma}
\label{lem:permutation}A controlled permutation on $n$ qubits can be performed by a quantum circuit of size $O(n)$ and depth $O(\log n)$.
\end{lemma}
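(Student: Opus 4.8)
The goal is to show that a *controlled* permutation on $n$ qubits — that is, a gate that, conditioned on a control wire being $\ket{1}$, applies a fixed permutation $\sigma\in S_n$ to $n$ data wires, and does nothing when the control is $\ket{0}$ — can be realized by a circuit of size $O(n)$ and depth $O(\log n)$. The plan is to reduce everything to controlled-SWAP (Fredkin) gates and exploit that a permutation decomposes into disjoint cycles, each of which can be reordered *in parallel* by a balanced binary-tree scheme.

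First I would recall that an arbitrary permutation $\sigma$ of $n$ elements factors into disjoint cycles $c_1,\dots,c_r$, with total length $\sum_j |c_j| = n$. Because the cycles act on disjoint sets of wires, the subcircuits implementing the different cycles commute and can be laid out in parallel; hence the overall depth is the maximum over cycles, and the overall size is the sum over cycles. So it suffices to handle a single $k$-cycle with size $O(k)$ and depth $O(\log k)$, and then the aggregate bound $O(n)$ size, $O(\log n)$ depth follows. For a single $k$-cycle $(a_1\, a_2\, \cdots\, a_k)$, the cyclic shift of the contents of wires $a_1,\dots,a_k$ can be done by a "recursive halving" strategy mirroring the one used throughout the paper: conditionally swap the first half of the block with an appropriately offset copy so that after $O(\log k)$ rounds each value has been routed to its destination; each round uses $O(k)$ controlled-SWAPs acting on disjoint wire pairs, so it has depth $O(1)$ and size $O(k)$, giving $O(k)$ size and $O(\log k)$ depth in total. (Concretely, a $k$-cycle can be written as a product of two "block reversals" or, more simply, realized by the standard logarithmic-depth rotation network; I would cite or sketch whichever is cleanest.)

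Next I would make everything *controlled*: each Fredkin gate above already has the wire it conditions on, and to obtain the outer control we simply add the global control wire as an extra positive control to every SWAP, turning each controlled-SWAP into a doubly-controlled SWAP, which is a constant-size gate in the standard gate set (or expands to $O(1)$ elementary gates). This does not change the asymptotic size or depth. One subtlety is fan-out: the single control wire must feed $O(n)$ gates, some of which lie in the same layer. I would resolve this by first spreading the control value onto a logarithmic-depth, linear-size "fan-out tree" of $O(n)$ fresh ancillas (a balanced binary tree of CNOTs, depth $O(\log n)$, size $O(n)$), using one leaf ancilla per controlled-SWAP so that within each layer all controls read distinct wires, and then uncomputing the tree at the end; this preserves reversibility and keeps the bounds at $O(n)$ size and $O(\log n)$ depth. (If the ambient circuit model permits unbounded fan-out of classical-like control this step is unnecessary, but including it makes the lemma robust.)

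The main obstacle I anticipate is the depth bound for a single long cycle together with the fan-out bookkeeping: naively "bubbling" a value around a $k$-cycle costs depth $\Theta(k)$, so the crux is to exhibit the logarithmic-depth routing network for a cyclic shift and to check that its successive layers act on *disjoint* wire pairs (so each layer is depth $O(1)$) — this is the one place a genuine argument, rather than a one-line observation, is required. Everything else (disjointness of cycles giving parallel composition, adding one extra control per SWAP, the fan-out tree and its uncomputation) is routine once that network is in hand.
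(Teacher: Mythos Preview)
Your proposal is correct, but the paper takes a shorter route. Instead of decomposing $\sigma$ into disjoint cycles and then implementing each $k$-cycle by a logarithmic-depth rotation network, the paper invokes the classical fact (citing~\cite{MN01}) that \emph{any} permutation is the product of two involutions, i.e.\ two sets of pairwise-disjoint transpositions. Hence an \emph{uncontrolled} permutation already has depth~$2$ and size $O(n)$; the entire $O(\log n)$ depth in the lemma comes solely from the fan-out tree that copies the control bit onto $O(n)$ ancillas so that each (controlled) SWAP in those two layers can read its own copy. Your fan-out step is exactly the same as the paper's; the difference is that you work harder than necessary on the permutation itself. Your cycle-by-cycle argument is not wrong---it also lands at $O(n)$ size and $O(\log n)$ depth---but it introduces the ``logarithmic-depth cyclic shift'' as a nontrivial subgoal that the two-involution decomposition sidesteps entirely. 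The paper's approach is thus strictly simpler; yours would be preferable only if one wanted to avoid citing or reproving the two-involution fact.
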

\begin{proof}[Proof]
Any permutation can be written as the composition of two sets of disjoint transpositions, and therefore any permutation can be performed in constant time, using two time steps~\cite{MN01}. To perform a \emph{controlled} permutation, it suffices to create $O(n)$ ancillas with the correct controlled state, which can be done in $O(\log n)$ depth with $O(n)$ gates.
\end{proof}

Note that the ancillas used to control the permutation are linear in number but can be reused for nested recursive calls, hence the total number of ancillary qubits used for compiling a program on $n$ qubits will be linear in $n$.



In the case of Figure~\ref{fig:merging}, we are able to merge the two instances of \texttt{search} since they have the same input size and therefore encode precisely the same unitary operation, up to a renaming of qubits. In a general program, one needs to account for the total number of procedures that differ either in procedure name, input size and integer input.
We prove that this number is polylogarithmically bounded.

\begin{lemma}
\label{lem:plpbound}
A procedure call occurring in an \plp{} program on $n$ input qubits results in $O(\log n)$ calls to mutually recursive procedures with unique sizes.
\end{lemma}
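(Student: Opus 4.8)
The plan is to track, along any chain of nested recursive calls starting from a given procedure call, the set of triples (procedure name, input qubit list sizes, integer arguments) that are distinct, and to show that this set has size $O(\log n)$. First I would restrict attention to a single $\sim_\PR$-equivalence class of mutually recursive procedures, since calls that cross between classes (i.e., $\proc \succ_\PR \proc'$) can only be nested a bounded number of times — bounded by the number of equivalence classes, which is a constant depending only on $\PR$, not on $n$. So it suffices to bound the number of distinct calls within one recursive class.

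Within such a class, the {\half} condition (Definition~\ref{def:half}) guarantees that every recursive call cuts at least one of the input qubit lists in half. Since each of the finitely many qubit list parameters starts with size at most $n$ and, by the semantics of $\sa^\boxminus$ and $\sa^\boxplus$ in Figure~\ref{table:semnatbool}, halving a list of size $m$ produces a list of size $\lceil m/2\rceil$, each individual list can be halved at most $\lceil \log_2 n\rceil$ times before becoming a singleton (at which point further recursion terminates by Theorem~\ref{thm:termination}). Combined with the observation that there are only constantly many parameters, the total number of distinct tuples of sizes reachable along a nested chain of recursive calls is $O(\log n)$: at each step at least one coordinate of the size vector strictly decreases by a factor-of-two step, and each coordinate admits only $O(\log n)$ such steps.

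The remaining subtlety is the integer arguments: a recursive call $\call \proc'(\sa_1,\ldots,\sa_n);$ carries no explicit integer parameters in the syntax of Figure~\ref{fig:syntax} (procedures take only qubit lists), but integer expressions of the form $\ia \pm k$, $\ia/2$, $\size{\sa}$ appearing inside the procedure body are fully determined by the sizes of the qubit list arguments and by the program text. Hence once the tuple of input sizes is fixed, the behaviour of the procedure — and in particular the unitary it implements, up to qubit renaming — is completely determined, so distinct calls correspond exactly to distinct size-tuples. Thus the count is the number of distinct size-tuples, which we have just bounded by $O(\log n)$.

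The main obstacle I anticipate is making precise the claim that "at least one coordinate strictly decreases" propagates correctly through the call graph: a recursive class may contain several mutually recursive procedures with different arities and different parameter orderings, so a call from $\proc$ to $\proc'$ may permute or drop qubit lists, and one must argue that the multiset of sizes still admits a well-founded measure decreasing along every recursive edge. I would handle this by assigning to each configuration of (procedure, size-vector) the measure $\sum_i \lceil \log_2 (s_i+1)\rceil$ over its qubit list arguments, observing that the {\half} condition forces this measure to drop by at least one on every recursive call, that it starts at $O(\log n)$, and that it is non-negative — giving the $O(\log n)$ bound on the depth of nested recursive calls and hence on the number of distinct such calls.
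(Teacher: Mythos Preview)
The paper does not include a proof of this lemma, so I can only assess the proposal on its own merits.

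Your overall strategy—reduce to a single $\sim_\PR$-class and use a logarithmic potential on the vector of parameter sizes, exploiting {\half}—is the natural one and is essentially what the surrounding discussion in the paper points toward. Two issues, one cosmetic and one substantive.

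\emph{Cosmetic.} The chosen measure $\sum_i \lceil\log_2(s_i+1)\rceil$ need not strictly decrease under a single halving; e.g.\ $s=3\mapsto\lceil 3/2\rceil=2$ leaves $\lceil\log_2 4\rceil=\lceil\log_2 3\rceil=2$ unchanged. Taking the real-valued $\sum_i\log_2 s_i$ instead works: each halving contributes a drop of at least $\log_2(3/2)>0$, the initial value is $O(\log n)$ (the common arity within a class is a constant, since the no-cloning restriction forces arities along any $\sim_\PR$-cycle to be equal), and hence the depth of any chain of nested recursive calls is $O(\log n)$.

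\emph{Substantive.} Your final ``hence'' conflates the \emph{depth} of one nested chain with the \emph{number of distinct size-tuples} reachable across all \tbt{qcase} branches, and these can differ. Consider a two-parameter procedure whose body is
\[
\qcase{\bq_1[1]}{\call\proc(\bq_1^\boxminus,\bq_2);}{\call\proc(\bq_1,\bq_2^\boxminus);}
\]
This satisfies both {\half} and $\wi_{\leq 1}$, yet from sizes $(n,n)$ one reaches every pair $(\lceil n/2^i\rceil,\lceil n/2^j\rceil)$ with $i,j\leq \lceil\log_2 n\rceil$, giving $\Theta(\log^2 n)$ distinct tuples while the depth of any single chain is only $O(\log n)$. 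More generally, with arity $a$ one obtains $O(\log^a n)$ distinct tuples—still polylogarithmic, which is all Theorem~\ref{thm:comp} requires, but not $O(\log n)$. So if the lemma is read purely as a depth bound, your argument (with the corrected measure) is complete; if it is read, as the paragraph preceding the lemma suggests, as counting distinct (procedure, size) pairs for the merging step, then both the stated bound and your argument need this adjustment to $O(\log^a n)$.
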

In the above lemma, the number of input qubits is obtained by summing the sizes of (i.e., the number of qubits in) each operand of the procedure call.
From those results, we obtain that the compilation process produces a quantum circuit of polynomial size and polylog depth in the size of the inputs.

\begin{theorem}[Compilation]
\label{thm:comp}
Given a \plp{} program $\PR$, and input size $n=\sum_{\bq \in Var(\PR)} |\bq|$, the quantum circuit produced by the compilation process is of size $O(n\, \textnormal{polylog}(n))$ and depth $O(\textnormal{polylog}(n))$.
\end{theorem}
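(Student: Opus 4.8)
The plan is to analyze the recursive structure of the compilation algorithm sketched in Section~\ref{ss:compilsketch} and bound separately the depth contributed by (a) non-recursive statement compilation, (b) the merging machinery (controlled permutations plus anchoring Toffolis), and (c) the nesting of mutually recursive calls. I would proceed by structural induction on the program statement, carrying an inductive hypothesis of the form: compiling a statement $\ST$ occurring in procedure $\proc$ on inputs whose total size is $n$ yields a circuit of depth $O(\operatorname{polylog}(n))$ and size $O(n\operatorname{polylog}(n))$, where the polylog exponents are controlled by the recursion-halving scheme.

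First I would dispatch the easy cases. A unitary application $\sa[\ia]\asg\U^g(\ia);$ compiles to a single gate (depth and size $O(1)$); a sequence $\ST_0\,\ST_1$ composes circuits, so depths and sizes add; an $\tbt{if}$ picks one branch (its Boolean being statically evaluable from $\bar s$), so it is no worse than the larger branch; a $\tbt{qcase}$ adds one control wire to each branch's circuit and places the two branch circuits in sequence, costing an additive $O(1)$ in depth beyond the sum of the branch depths (and the control bookkeeping is $O(1)$ per nesting level). A non-recursive $\tbt{call}$ is replaced by the compilation of the callee's body with arguments substituted; since the call graph restricted to non-mutually-recursive edges is a DAG of bounded (program-dependent, hence constant) length, these contribute only a constant multiplicative factor. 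The genuinely interesting case is a mutually recursive $\tbt{call}$, where anchoring/merging is invoked.

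For the recursive case I would argue as follows. By the $\half$ condition, every mutually recursive call cuts some operand list in half, so the sizes appearing in the dictionary keys form a set bounded by Lemma~\ref{lem:plpbound}: there are $O(\log n)$ distinct (procedure, size) pairs along any chain of nested recursive calls. The $\wi_{\leq1}$ restriction guarantees that within a single procedure body at most one mutually recursive call lies on any executed path, and that repeated calls to the same procedure with the same size occur in orthogonal $\tbt{qcase}$ branches, so merging is always applicable and correct — this is exactly the point flagged in Section~\ref{ss:compilsketch}. Consequently the compiled circuit contains, up to renaming, a single recursive sub-circuit per distinct key, and the unfolding has depth equal to $O(\log n)$ nested layers. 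Each layer adds: a constant-depth chunk of Toffoli gates for anchoring/merging, a controlled permutation to route the correct half into the recursive sub-circuit (depth $O(\log n)$, size $O(n)$ by Lemma~\ref{lem:permutation}), and the merged branch bodies. Multiplying the per-layer depth $O(\log n)$ by the $O(\log n)$ layers, and absorbing the constant-depth contributions from the non-recursive statements interleaved at each layer, gives total depth $O(\operatorname{polylog}(n))$ — in fact $O(\log^{k+1}n)$ if the program needs $k$ levels of the $\half$ recursion scheme. For the size bound, each of the $O(\log n)$ recursion layers contributes $O(n)$ gates from the permutations, $O(n)$ from applying single-qubit gates across the list, and $O(1)$ ancillas' worth of Toffolis; reusing the permutation ancillas across nested calls (as noted after Lemma~\ref{lem:permutation}) keeps the ancilla count $O(n)$, so the total is $O(n\operatorname{polylog}(n))$.

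The main obstacle I anticipate is making the size bound tight rather than naively exponential: one must verify that the anchoring dictionary genuinely collapses the two branch-calls into one sub-circuit at every level, i.e., that the merging invariant is preserved through nested $\tbt{qcase}$s and through the argument substitutions performed for non-recursive calls that sit between two recursive ones. This requires checking that after substitution the operand sizes — which, by the remark in Section~\ref{ss:overview}, determine the pointer lists and hence the compiled unitary — really do coincide for the two branches that $\wi_{\leq1}$ forces to be orthogonal, so that they hash to the same dictionary key. Once that invariant is established, the depth and size recurrences are the routine $D(n) \le D(\lceil n/2\rceil) + O(\log n)$ and $S(n) \le S(\lceil n/2\rceil) + O(n)$ (iterated a constant number of times for $k$-fold polylog), both of which solve to the claimed bounds.
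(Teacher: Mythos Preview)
Your proposal is essentially correct and follows the same approach as the paper. The paper itself does not spell out a proof of Theorem~\ref{thm:comp}: it is meant to follow directly from the compilation sketch in Section~\ref{ss:compilsketch}, Lemma~\ref{lem:permutation} (controlled permutations in $O(\log n)$ depth and $O(n)$ size), and Lemma~\ref{lem:plpbound} (only $O(\log n)$ distinct size profiles for mutually recursive calls). Your structural induction with anchoring/merging at the recursive-call case, layering $O(\log n)$ permutation depth over $O(\log n)$ nesting levels (iterated a constant number of times across the $\succ_\PR$-chain of procedure ranks), is exactly the argument the paper leaves implicit, and your identification of the ``same-size'' invariant for merging is the right subtlety to flag --- it is precisely what Lemma~\ref{lem:plpbound} is there to control.
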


\begin{example}
To illustrate the compilation algorithm and the polylogarithmic depth bound, consider the \texttt{SEARCH} program from Figure~\ref{fig:lfoq-examples}.
The compilation of this program with $\size{\bq_1}=14$ gives the quantum circuit depicted in Figure~\ref{fig:searchqc}.
Each recursive call yields a constant number of controlled gates and makes use of $2$ ancillary qubits: $1$ for anchoring and $1$ to swap the first and second half.
Thus the circuit obtained has depth $O(\log \size{\bq_1})$ and a number $O(\log \size{\bq_1})$ of ancillary qubits.
\end{example}

\begin{figure}
\centering
\newcommand{\qun}{\lstick[14]{$\bq_1$}}
\newcommand{\qde}{\lstick{$\bq_2$}}
\newcommand{\aze}{\lstick{$\ket{0}$}}
\newcommand{\bigperm}{\gate[14][2.2cm]{}\permute{9,10,11,12,13,14,7,8,1,2,3,4,5,6}}
\newcommand{\midperm}{\gate[6][1.2cm]{}\permute{5,6,3,4,1,2}}
\begin{quantikz}[row sep={5 mm,between origins},column sep=1mm]
\qun&         &         &         &\bigperm &          &         &          &\midperm  &\octrl{1}&\midperm  &          &          &\bigperm &         &         & \\[-2mm]
    &         &         &         &         &          &         &          &          &\ctrl{13}&          &          &          &         &         &         & \\
    &         &         &         &         &\octrl{1} &\octrl{1}&\ctrl{1}  &          &         &          &\ctrl{1}  &\octrl{1} &         &         &         & \\[-2mm]
    &         &         &         &         &\octrl{12}&\ctrl{11}&\octrl{12}&          &         &          &\octrl{12}&\octrl{12}&         &         &         & \\
    &         &         &         &         &          &         &          &          &         &          &          &          &         &         &         & \\[-2mm]
    &         &         &         &         &          &         &          &          &         &          &          &          &         &         &         & \\
    &\octrl{1}&\octrl{1}&\ctrl{1} &         &          &         &          &          &         &          &          &          &         &\ctrl{1} &\octrl{1}& \\[-2mm]
    &\octrl{8}&\ctrl{7} &\octrl{9}&         &          &         &          &          &         &          &          &          &         &\octrl{9}&\octrl{8}& \\
    &         &         &         &         &          &         &          &          &         &          &          &          &         &         &         & \\[-2mm]
    &         &         &         &         &          &         &          &          &         &          &          &          &         &         &         & \\
    &         &         &         &         &          &         &          &          &         &          &          &          &         &         &         & \\[-2mm]
    &         &         &         &         &          &         &          &          &         &          &          &          &         &         &         & \\
    &         &         &         &         &          &         &          &          &         &          &          &          &         &         &         & \\[-2mm]
    &         &         &         &         &          &         &          &          &         &          &          &          &         &         &         & \\
\qde&         & \targ{} &         &         &          &\targ{}  &          &          & \targ{} &          &          &          &         &         &         & \\
\aze&\targ{}  &         & \targ{} &         &\ctrl{2}  &\ctrl{-1}&\ctrl{3}  &          &         &          & \ctrl{3} &\ctrl{2}  &         &\targ{}  &\targ{}  & \rstick{$\ket{0}$}\\
\aze&         &         & \targ{} &\ctrl{-3}&          &         &          &          &         &          &          &          &\ctrl{-3}&\targ{}  &         & \rstick{$\ket{0}$}\\
\aze&         &         &         &         &\targ{}   &         &\targ{}   &          &\ctrl{-3}&          &\targ{}   &\targ{}   &         &         &         & \rstick{$\ket{0}$}\\
\aze&         &         &         &         &          &         &\targ{}   &\ctrl{-17}&         &\ctrl{-17}&\targ{}   &          &         &         &         & \rstick{$\ket{0}$} 
\end{quantikz}
\caption{Quantum circuit resulting of compiling the SEARCH program}
\label{fig:searchqc}
\end{figure}
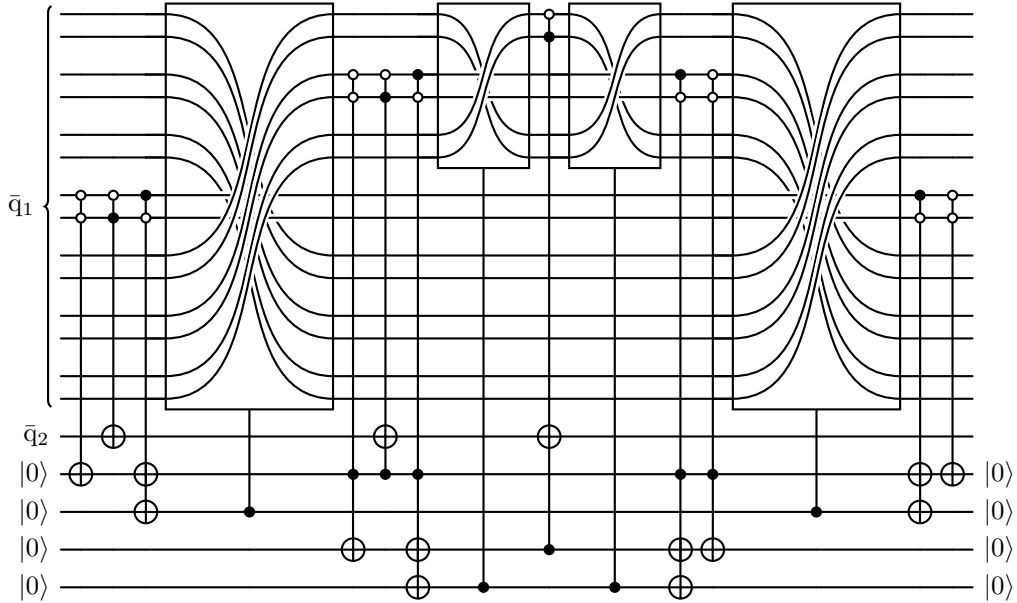
%

\subsection{Limits of Quantum Polylogarithmic Time}\label{ss:strictqnc}

Theorem~\ref{thm:comp} shows that programs in {\plp} can be compiled to quantum circuits of polylog depth and polynomial size, which means that they compute operators in {\qnc}.
In this section, we show that {\plp} is however not complete for {\qnc}, thus recovering the known separation between {\fbqpolylog} and {\qnc}.

{\qnc} is defined in~\cite{MN01} as the union for all $k\in\N$ of the classes of quantum unitary transformations that can be computed by a family of quantum circuits of depth $O(\log^k n)$ with a polynomial number of ancillas.
To compare with {\fbqpolylog}, we define {\bqnc} (for Bounded-error {\qnc}) as in~\cite{CW00} as the class of functions $\{0,1\}^{*}\to\{0,1\}^{*}$ that can be approximated with probability at least $\sfrac23$ by a {\qnc} transformation.

To study the relationship between {\plp} and {\bqnc}, we use the \emph{query model} of quantum computation which is the scenario in which one wishes to compute a function by making use of \emph{black boxes} that are accessed via queries~\cite{A18}. A black box performs a unitary transformation on the quantum state according to some oracle function $\mathcal{O}:\{0,1\}^\ast \to \{0,1\}$, where the operation is usually defined as $\ket{\bar{x},y}\mapsto \ket{\bar{x},y\oplus \mathcal{O}(\bar{x})}$, which is considered to be performed in a single step. 

Given a function $f$, we denote by $Q(f):\N\to\N$ the function that maps $n\in\mathbb{N}$ to the least number of queries necessary for a quantum algorithm to approximate $f$ with bounded-error on inputs of size $n$. The function $Q(f)$ gives a \emph{lower bound} on the time complexity of approximating $f$. For instance, for the OR, AND, and PARITY defined as
\[\arraycolsep=5mm
\begin{array}{ccc}
\text{OR}(\bar{x})     \triangleq \max_{i=1\dots n} x_i &
\text{AND}(\bar{x})    \triangleq \min_{i=1\dots n} x_i &
\text{PARITY}(\bar{x}) \triangleq \bigoplus_{i=1}^n x_i
\end{array}
\]
we have that, while Grover's algorithm~\cite{G96} allows for a quadratic speedup in the query complexity of AND and OR, no speedup exists for PARITY.

\begin{lemma}[\protect{\cite{Z99}}]\label{lem:and-or-query} For $f\in\{\textnormal{AND},\textnormal{OR}\}$, we have that $Q(f)=\Theta(\sqrt{n})$.
\end{lemma}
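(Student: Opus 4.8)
The plan is to establish the two bounds separately, following the standard query-complexity argument of Beals et al.\ and the original result of Zalka. For the upper bound $Q(f)=O(\sqrt{n})$ with $f\in\{\textnormal{AND},\textnormal{OR}\}$, I would invoke Grover's search algorithm~\cite{G96}: computing $\textnormal{OR}(\bar x)$ amounts to deciding whether the search space $\{1,\dots,n\}$ contains an index $i$ with $x_i=1$, which Grover's algorithm solves with bounded error using $O(\sqrt n)$ queries to the oracle $\ket{\bar x, i, y}\mapsto \ket{\bar x, i, y\oplus x_i}$; the case of $\textnormal{AND}$ is symmetric by De Morgan (replace each $x_i$ by $1-x_i$, which costs one extra gate and no extra queries), so $Q(\textnormal{AND})=Q(\textnormal{OR})+O(1)=O(\sqrt n)$.

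For the lower bound $Q(f)=\Omega(\sqrt n)$, I would use the polynomial method (Beals--Buhrman--Cleve--Mosca--de~Wolf). The key facts are: (i) after a quantum algorithm makes $T$ queries to the oracle encoding $\bar x$, the acceptance probability is a real multilinear polynomial in $x_1,\dots,x_n$ of degree at most $2T$; (ii) if the algorithm approximates $f$ with bounded error, then this polynomial $p$ satisfies $0\le p(\bar x)\le 1$ for all $\bar x\in\{0,1\}^n$ and $|p(\bar x)-f(\bar x)|\le 1/3$. Symmetrising $p$ over the permutations of the coordinates yields a single-variable polynomial $q$ of the same degree with $q(k)\approx f$ of any input of Hamming weight $k$; for $\textnormal{OR}$ this means $q(0)\le 1/3$ and $q(k)\ge 2/3$ for $1\le k\le n$, so $q$ (or rather $q-1/2$ scaled) jumps by a constant between $k=0$ and $k=1$ while staying bounded on $[0,n]$. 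By a Markov-brothers inequality argument (the quantitative statement that a bounded polynomial on an interval with large derivative somewhere must have high degree), this forces $\deg q=\Omega(\sqrt n)$, hence $T=\Omega(\sqrt n)$. The $\textnormal{AND}$ case again follows by the De Morgan symmetry.

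I expect the main obstacle — or at least the only non-bookkeeping step — to be the lower-bound half, specifically the clean extraction of the $\Omega(\sqrt n)$ degree bound from the symmetrised polynomial via Markov's inequality; the upper bound is an essentially immediate appeal to Grover. Since the statement is explicitly attributed to~\cite{Z99} in the excerpt, I would in practice simply cite Zalka's theorem for the tight $\Theta(\sqrt n)$ and sketch only the Grover upper bound and the De~Morgan reduction between $\textnormal{AND}$ and $\textnormal{OR}$, leaving the matching lower bound to the reference. This keeps the proof short while making the paper self-contained enough that a reader sees why the bound holds and why both functions obey it.
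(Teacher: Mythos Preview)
Your sketch is mathematically correct, but note that the paper gives no proof at all for this lemma: it is stated as a cited result from~\cite{Z99} and used as a black box, exactly as you anticipate in your final paragraph. So the ``paper's own proof'' is simply the citation, and your instinct to defer to Zalka (with at most a one-line mention of Grover for the upper bound and De~Morgan for the $\textnormal{AND}$/$\textnormal{OR}$ reduction) matches precisely what the authors do.
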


\begin{lemma}[\protect{\cite{F98,BBCMdW01}}]\label{lem:parity-query}
$Q(\textnormal{PARITY})=\Theta(n)$.
\end{lemma}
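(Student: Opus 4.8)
The statement to prove is $Q(\textnormal{PARITY}) = \Theta(n)$, citing~\cite{F98,BBCMdW01}.

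The plan is to establish the two bounds separately. The upper bound $Q(\textnormal{PARITY}) = O(n)$ is essentially trivial: a classical algorithm can query each of the $n$ input bits exactly once and compute the parity exactly, so $n$ queries suffice even without any quantum speedup. In fact one can do slightly better and halve the number of queries using the Bernstein--Vazirani-style trick of querying pairs of bits with a single quantum query to learn their XOR, but since we only need an $O(n)$ bound, the naive classical algorithm is enough and no quantum machinery is required here.

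The substance is the lower bound $Q(\textnormal{PARITY}) = \Omega(n)$, which says that even a bounded-error quantum algorithm cannot compute the parity of $n$ bits with fewer than a constant fraction of $n$ queries. First I would recall the polynomial method of Beals, Buhrman, Cleve, Mosca and de Wolf: after $T$ queries, the acceptance probability of a quantum query algorithm is a real multilinear polynomial in the input bits $x_1, \dots, x_n$ of degree at most $2T$. For a bounded-error algorithm computing PARITY, this polynomial must be within $1/3$ of the Boolean function $\bigoplus_i x_i$ at every point of $\{0,1\}^n$; hence it is an \emph{approximating polynomial} for PARITY. The classical fact — due to Minsky and Papert — is that any real polynomial approximating PARITY on the Boolean cube must have full degree $n$ (by a symmetrization argument: restricting to the symmetric subspace, an approximator becomes a univariate polynomial that must change sign $n$ times, hence has degree $\geq n$). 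Combining, $2T \geq n$, so $T \geq n/2 = \Omega(n)$. Alternatively, one can invoke the even earlier result of Farhi, Goldstone, Gutmann and Sipser~\cite{F98}, who showed directly by an adversary/information-theoretic argument that $n$ queries are needed to compute parity even in the exact or bounded-error quantum setting.

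Since both cited references already contain complete proofs, the proof in the paper will simply be the observation that the result is immediate from these works, and I expect the authors to write a one-line proof of this form. The only point requiring a sentence of care is that the ``query'' cost in our setting matches the oracle model of those references — i.e., that our definition of $Q(f)$ as the minimal number of black-box queries to approximate $f$ with bounded error coincides with the quantity bounded in~\cite{F98,BBCMdW01} — which is true by construction of the query model introduced just above. The main conceptual obstacle, were one to reprove it, is the degree lower bound for approximating polynomials of PARITY, but that is exactly what is black-boxed via the citations.

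\begin{proof}[Proof sketch]
The upper bound is immediate: querying each of the $n$ input bits once and taking their XOR computes $\textnormal{PARITY}$ exactly, so $Q(\textnormal{PARITY}) \leq n = O(n)$. For the lower bound, by the polynomial method of~\cite{BBCMdW01}, a quantum algorithm making $T$ queries has acceptance probability equal to a real multilinear polynomial of degree at most $2T$ in the input bits; if the algorithm approximates $\textnormal{PARITY}$ with bounded error, this polynomial approximates $\bigoplus_{i=1}^n x_i$ pointwise on $\{0,1\}^n$ to within $1/3$. By the classical degree bound for approximating polynomials of parity, any such polynomial has degree exactly $n$, whence $2T \geq n$ and $T = \Omega(n)$; this is also the content of~\cite{F98}. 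Combining the two bounds gives $Q(\textnormal{PARITY}) = \Theta(n)$.
\end{proof}
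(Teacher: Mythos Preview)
Your proposal is correct, and you accurately anticipated the paper's treatment: the paper gives no proof whatsoever for this lemma, simply citing \cite{F98,BBCMdW01} as its statement and moving on. Your sketch via the polynomial method (degree bound $2T$ on the acceptance polynomial combined with the full-degree requirement for any approximator of parity) is the standard argument from \cite{BBCMdW01} and is more than what the paper provides.
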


In contrast to the lower bounds on the query complexity for $\textnormal{AND}$, $\textnormal{OR}$, and $\textnormal{PARITY}$, we can show that there is an upper bound on the query complexity of functions approximable by programs in $\plp$.
This bound on quantum random-access Turing machines can be obtained by viewing the read-input transition as an oracle query as in~\cite{yamakami2022,yamakami2024} and noting that the access to this oracle is bounded by the step-count of the QRATM.

\begin{lemma}\label{lem:qratmbound}
Let $f\in\sem{\plp}_{\geq\frac23}$. There exists $k\in\mathbb{N}$ such that $Q(f)=O(\log^k n)$.
\end{lemma}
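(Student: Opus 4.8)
The plan is to combine the soundness direction of Theorem~\ref{thm:soundnesscompleteness} with the observation, already hinted at in the text preceding the statement, that in a polylog-time QRATM the total number of calls to the read-input (query) transition is bounded by the step count. Concretely, let $f\in\sem{\plp}_{\geq\frac23}$. By Theorem~\ref{thm:soundnesscompleteness}, $f\in\fbqpolylog$, so there is a QRATM $M$ approximating $f$ with probability at least $\frac23$ that halts in time $O(\log^k n)$ for some $k\in\mathbb{N}$.

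First I would make precise how a QRATM fits the query model. The special query step (in state $s_{\text{query}}$, swap the input-tape cell addressed by the index tape with the cell under a work-tape head, then move to $s_{\text{accept}}$) is exactly one application of an oracle that exposes one bit of the input; all other transitions act only on the index tape and work tapes and are "free" with respect to oracle access. This is the same reading of the input transition used in~\cite{yamakami2022,yamakami2024}. So a run of $M$ of length $T(n)=O(\log^k n)$ uses at most $T(n)$ invocations of the query step. One subtlety: a QRATM is a superposition of pure configurations, and branches may invoke the query step at different times; but since every branch runs for the same total time $T(n)$ (halting time is a function of $\size x$ only), the number of query-step invocations along any branch is at most $T(n)$, and one can pad/synchronize so that the whole surface configuration is driven by a sequence of at most $T(n)$ global oracle applications interleaved with oracle-free unitary steps. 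Hence $M$ can be simulated by a quantum query algorithm making $O(\log^k n)$ queries to the oracle $\mathcal{O}_x:\, i\mapsto x_i$ and approximating $f$ with bounded error on inputs of size $n$. By the definition of $Q(f)$ as the least number of queries of such an algorithm, this gives $Q(f)=O(\log^k n)$.

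The step I expect to require the most care is the synchronization/interleaving argument: turning the "semantic" query step of a QRATM — which can fire at data-dependent times in different branches of the superposition — into a clean sequence of at most $T(n)$ oracle gates of the standard query model $\ket{\bar x,y}\mapsto\ket{\bar x,y\oplus\mathcal{O}(\bar x)}$ applied uniformly across all branches, while preserving reversibility, well-formedness, and the output distribution. The resolution is that the number of query steps along each branch is uniformly bounded by the (branch-independent) running time $T(n)$, so one inserts identity oracle actions where a branch is not querying, and records in the work registers which index is being queried, recovering the standard oracle semantics without changing the final amplitudes. Everything else — the reduction to $\fbqpolylog$ via Theorem~\ref{thm:soundnesscompleteness}, and reading off $Q(f)$ from the query count — is routine.
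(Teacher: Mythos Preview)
Your proposal is correct and follows exactly the approach the paper indicates in the paragraph preceding the lemma: use the soundness direction of Theorem~\ref{thm:soundnesscompleteness} to pass from $\sem{\plp}_{\geq\frac23}$ to a polylog-time QRATM, then view the read-input transition as an oracle query whose number of invocations is bounded by the step count. Your additional discussion of the synchronization issue across superposed branches is more detailed than anything the paper supplies, and is a reasonable elaboration of the cited argument from~\cite{yamakami2022,yamakami2024}.
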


Lemma~\ref{lem:qratmbound} gives a polylogarithmic upper bound on the query complexity of functions in $\sem{\plp}_{\geq\frac23}$.
Lemmas~\ref{lem:and-or-query} and~\ref{lem:parity-query} give lower bounds on the query complexity of $\textnormal{AND}$, $\textnormal{OR}$, and $\textnormal{PARITY}$ that are bigger than polylogarithms, hence proving that those functions are not in $\sem{\plp}_{\geq\frac23}$ even though they can be approximated by circuits of polylogarithmic depth and polynomial size.

\begin{lemma}
\label{lem:notinfbqpolylog}
$\textnormal{AND},\textnormal{ OR},\textnormal{ PARITY}\in\bqnc \setminus \sem{\plp}_{\geq\frac23}$.
\end{lemma}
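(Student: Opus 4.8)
The plan is to prove the two set memberships separately, as they require quite different arguments.

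First, for the non-membership direction $\textnormal{AND},\textnormal{OR},\textnormal{PARITY}\notin\sem{\plp}_{\geq\frac23}$, I would argue by contradiction. Suppose one of these functions, say $f$, were in $\sem{\plp}_{\geq\frac23}$. Then by Lemma~\ref{lem:qratmbound} there is a $k\in\mathbb{N}$ with $Q(f)=O(\log^k n)$, i.e.\ the bounded-error quantum query complexity of $f$ is polylogarithmic. But Lemmas~\ref{lem:and-or-query} and~\ref{lem:parity-query} give $Q(\textnormal{AND})=Q(\textnormal{OR})=\Theta(\sqrt{n})$ and $Q(\textnormal{PARITY})=\Theta(n)$, both of which are asymptotically larger than any polylogarithm. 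This contradiction rules out membership for all three functions. Here I should be slightly careful about matching the query-model notion of ``approximating $f$ on inputs of size $n$'' used in the definition of $Q(f)$ with the input-as-oracle viewpoint underlying Lemma~\ref{lem:qratmbound}: a \plp{} program receives its input as a qubit list rather than via an oracle, but as noted in the text preceding Lemma~\ref{lem:qratmbound} the QRATM read-input transition is exactly an oracle query and the number of such queries is bounded by the (polylog) step count, so the reduction goes through.

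Second, for the membership direction $\textnormal{AND},\textnormal{OR},\textnormal{PARITY}\in\bqnc$, I would recall that $\bqnc$ is the class of functions approximable with probability at least $\sfrac23$ by a $\qnc$ transformation, i.e.\ by a uniform family of quantum circuits of polylogarithmic depth and polynomial size. In fact each of these three functions is computable \emph{exactly} (no error needed) in this model: $\textnormal{PARITY}(\bar x)=\bigoplus_{i=1}^n x_i$ is computed by a balanced binary tree of CNOT gates of depth $\lceil\log_2 n\rceil$, writing the result into a fresh ancilla; similarly $\textnormal{AND}$ and $\textnormal{OR}$ are computed by balanced binary trees of (multi-controlled) Toffoli/OR gates of logarithmic depth, again using a linear number of ancillas, all of which is plainly uniform. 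Since exact computation trivially satisfies the $\sfrac23$ bounded-error requirement, all three lie in $\bqnc$. Combining the two directions yields $\textnormal{AND},\textnormal{OR},\textnormal{PARITY}\in\bqnc\setminus\sem{\plp}_{\geq\frac23}$.

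The main obstacle is not conceptual depth but bookkeeping precision in the first direction: one must make sure the query-complexity lower bounds, which are classically stated for decision problems / Boolean outputs, line up cleanly with the function-computation setting of $\sem{\plp}_{\geq\frac23}$ and with the oracle-access model invoked in Lemma~\ref{lem:qratmbound}, so that the asymptotic comparison $\sqrt n, n \not= O(\log^k n)$ genuinely contradicts the assumption. The circuit-construction direction is routine and can be dispatched in a sentence or two by exhibiting the logarithmic-depth binary trees.
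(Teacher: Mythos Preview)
Your proposal is correct and follows essentially the same approach as the paper: the non-membership direction is exactly the paper's argument (combining the polylog upper bound of Lemma~\ref{lem:qratmbound} with the $\Theta(\sqrt{n})$ and $\Theta(n)$ lower bounds of Lemmas~\ref{lem:and-or-query} and~\ref{lem:parity-query}), and for the membership direction the paper merely asserts that these functions ``can be approximated by circuits of polylogarithmic depth and polynomial size,'' which your balanced binary-tree construction makes explicit.
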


Combining this result with Theorem~\ref{thm:comp}, we obtain a strict inclusion in $\bqnc$.
Note that this strict inclusion is mandated by the well-known result that $\fbqpolylog\subsetneq\bqnc$.

\begin{theorem}
\label{thm:more-than-polylog}
$\sem{\plp}_{\geq \frac{2}{3}}  \subsetneq \bqnc$.
\end{theorem}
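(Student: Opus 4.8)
The plan is to assemble the strict inclusion from the two families of results already established in this section. First I would prove the inclusion $\sem{\plp}_{\geq\frac23}\subseteq\bqnc$: by Theorem~\ref{thm:comp}, any $\PR\in\plp$ compiles to a uniform family of quantum circuits of polynomial size and polylogarithmic depth, which is exactly what the definition of $\qnc$ (from~\cite{MN01}) requires; since $\PR$ approximating $f$ with probability $\geq\frac23$ means the compiled circuit family computes a transformation witnessing $f\in\bqnc$ (by the definition of $\bqnc$ borrowed from~\cite{CW00}), we get $\sem{\plp}_{\geq\frac23}\subseteq\bqnc$. One should note the compilation must be checked to be uniform, but the compilation algorithm of Section~\ref{s:compilation} is explicitly an effective procedure taking $\PR$ and the input size $n$, so the family is (e.g.\ logspace- or polytime-) uniform as needed.

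Second, I would invoke Lemma~\ref{lem:notinfbqpolylog}, which gives us concrete witnesses --- $\textnormal{AND}$, $\textnormal{OR}$, $\textnormal{PARITY}$ --- lying in $\bqnc\setminus\sem{\plp}_{\geq\frac23}$. This immediately shows the inclusion is proper: $\sem{\plp}_{\geq\frac23}\subsetneq\bqnc$. So the proof is really just: (i) cite Theorem~\ref{thm:comp} together with the definitions of $\qnc$ and $\bqnc$ for the inclusion, and (ii) cite Lemma~\ref{lem:notinfbqpolylog} for the strictness.

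I expect no genuine obstacle here, since the substantive work has already been done: Theorem~\ref{thm:comp} (built on Lemmas~\ref{lem:permutation} and~\ref{lem:plpbound}) supplies the upper direction, and Lemma~\ref{lem:notinfbqpolylog} (built on the query-complexity Lemmas~\ref{lem:and-or-query}, \ref{lem:parity-query}, and~\ref{lem:qratmbound}) supplies the separation. The only point deserving a sentence of care is matching the bounded-error probability: Theorem~\ref{thm:comp} says nothing about the error probability per se, so one must observe that if $\PR$ approximates $f$ with probability $\geq\frac23$ in the sense of $\sem{\plp}_{\geq\frac23}$, then the circuit produced by compilation --- which computes exactly $\sem{\PR}$ --- also outputs $f(x)$ with probability $\geq\frac23$, so the membership in $\bqnc$ is witnessed with the right error bound. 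The remark after the statement already flags that this strict inclusion is in any case forced by the classical fact $\fbqpolylog\subsetneq\bqnc$ combined with Theorem~\ref{thm:soundnesscompleteness}, which can be mentioned as an alternative route to strictness that avoids re-deriving Lemma~\ref{lem:notinfbqpolylog}.

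\begin{proof}
The inclusion $\sem{\plp}_{\geq\frac23}\subseteq\bqnc$ is a direct consequence of Theorem~\ref{thm:comp}: given $\PR\in\plp$ computing a function $f$ with probability at least $\frac23$, the compilation process produces, for each input size $n$, a quantum circuit of size $O(n\,\textnormal{polylog}(n))$ and depth $O(\textnormal{polylog}(n))$; since the compilation is an effective procedure parametrized by $n$, this yields a uniform family of circuits of polynomial size and polylogarithmic depth, i.e.\ a $\qnc$ transformation. As this circuit family computes exactly $\sem{\PR}$, it approximates $f$ with probability at least $\frac23$, so $f\in\bqnc$.

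The inclusion is strict by Lemma~\ref{lem:notinfbqpolylog}, which exhibits $\textnormal{AND},\textnormal{OR},\textnormal{PARITY}\in\bqnc\setminus\sem{\plp}_{\geq\frac23}$. Hence $\sem{\plp}_{\geq\frac23}\subsetneq\bqnc$. (Alternatively, strictness also follows from Theorem~\ref{thm:soundnesscompleteness}, which identifies $\sem{\plp}_{\geq\frac23}$ with $\fbqpolylog$, together with the well-known separation $\fbqpolylog\subsetneq\bqnc$.)
\end{proof}
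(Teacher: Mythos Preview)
Your proposal is correct and matches the paper's own argument essentially verbatim: the paper states just before the theorem that it is obtained by ``combining this result [Lemma~\ref{lem:notinfbqpolylog}] with Theorem~\ref{thm:comp}'', and also notes the alternative route via $\fbqpolylog\subsetneq\bqnc$, exactly as you do. Your additional remarks about uniformity of the compilation and matching the error bound are sensible elaborations but not strictly needed beyond what the paper itself leaves implicit.
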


\section{Conclusion and Future Work}

We presented a quantum programming language $\plp$ that captures exactly $\fbqpolylog$, that is functions approximated in polylog time by a QRATM.
This characterization relies on some restrictions, in particular on the arguments of recursive calls to guarantee the complexity bound.
We show a compilation procedure that produces quantum circuits of polylog depth, hence recovering the inclusion in the class of quantum circuits of polylog depth and polynomial size $\qnc$.

\textit{Extending \plp{}}. The conditions imposed in the definition of \plp{} are by no means hard to extend while safeguarding most or even all the results presented here. For instance, the \half{} condition can be relaxed such that we consider halving of the input not at every procedure call but in every closed loop of procedure calls within a given rank.
Such an extension would increase the expressive power of the language, thus allowing a programmer more flexibility.
In this paper we chose to have a streamlined language with restrictions that are simple to check in order to obtain a more readable characterization.

\textit{Characterizing \qnc{}.} To our knowledge, there are currently no implicit characterizations of \bqnc{}. Extending {\plp} to characterize this class would be particularly interesting.
For example, adding a statement for recursively forking on each half of the quantum state would make it possible to capture $\textnormal{AND}$, $\textnormal{OR}$, and $\textnormal{PARITY}$ while still being sound for {\bqnc}.

\bibliography{references}
\end{document}